\documentclass[journal]{IEEEtran}
\IEEEoverridecommandlockouts
\usepackage{graphicx} 
\usepackage{amsthm}
\usepackage{amsmath}
\usepackage{amsfonts}
\usepackage{amssymb}
\usepackage{xspace}
\usepackage{bbm}
\usepackage{tcolorbox}
\usepackage{booktabs,caption}
\usepackage[flushleft]{threeparttable}
\usepackage{boxedminipage}
\tcbuselibrary{most}
\newtheorem{definition}{Definition}

\newtheorem{theorem}{Theorem}

\usepackage{tikz}
\usepackage{pgfplots}
\usepackage{makecell}
\usepackage{floatrow}
\usepackage{pifont}
\usepackage[colorlinks=true,urlcolor=black]{hyperref}
\usepackage{algorithm, algorithmicx, algpseudocode}
\floatsetup[table]{capposition=top}
\newfloatcommand{capbtabbox}{table}[][\FBwidth]
\hypersetup{
colorlinks=true,
linkcolor=blue,
filecolor=magenta,
urlcolor=cyan,
citecolor=green,
}
\pgfplotsset{compat=1.18} 

\definecolor{beni}          {HTML} {FF6666}
\definecolor{mizu}          {HTML} {40C7F4}
\definecolor{kon}           {HTML} {009F85}
\definecolor{ao}            {HTML} {5ED9C4}
\definecolor{anzu}          {HTML} {FFB366}
\definecolor{usumurazaki}   {HTML} {B39CD0}
\definecolor{murazaki}      {HTML} {845EC2}
\definecolor{ruri}          {HTML} {2C73D2}
\definecolor{enji}          {HTML} {D65DB1}
\definecolor{momo}          {HTML} {FF6F91}
\definecolor{kogane}        {HTML} {FFC75F}
\definecolor{ki}            {HTML} {F9F871}

% functionality box
\newtcolorbox{mybox}[2][]
  {colback=white,coltitle=black,colbacktitle = white, enhanced, coltitle=black,boxrule=0.7pt,
    attach boxed title to top left={xshift=4mm,yshift=-2.8mm}, left=0.1 mm,right= 0.1 mm,title=#2,#1}

\usepackage{enumitem}
\usepackage{enumerate}

% for cryptocode
\usepackage [
    % lambda,
    n,
    advantage,
    operators,
    sets,
    adversary,
    landau,
    probability,
    notions,
    logic,
    ff, 
    mm,
    primitives,
    events,
    complexity,
    asymptotics,
    keys,
    % thmmarks
]{cryptocode}
\createpseudocodeblock{pcb}{center, boxed}{}{}{}
\createprocedureblock{procb}{center, boxed}{}{}{}

% text control commands
\newcommand{\spar}[1]{%
  \medskip%
  \noindent%
  \textbf{#1.}\xspace%
}
\newcommand{\cirn}[1]{\raisebox{-0.7pt}{\ding{#1}}}

% system notations
\newcommand{\Sysname}{IRAC\xspace}
\newcommand{\Ctx}{\ensuremath{\mathsf{ctx}}\xspace}
\newcommand{\MyPP}{\ensuremath{\mathsf{pp}}\xspace}

\newcommand{\MySetup}{\ensuremath{\mathsf{Setup}}\xspace}
\newcommand{\IssuerSetup}{\ensuremath{\mathsf{IssuerSetup}}\xspace}
\newcommand{\AddAttributes}{\ensuremath{\mathsf{AddAttributes}}\xspace}
\newcommand{\Revoke}{\ensuremath{\mathsf{Revoke}}\xspace}
\newcommand{\IssueCred}{\ensuremath{\mathsf{IssueCred}}\xspace}
\newcommand{\VerifyCred}{\ensuremath{\mathsf{VerifyCred}}\xspace}

\newcommand{\PresentCred}{\ensuremath{\mathsf{PresentCred}}\xspace}

\newcommand{\MyVerify}{\ensuremath{\mathsf{Verify}}\xspace}

\newcommand{\zkSNARK}{\ensuremath{\mathsf{zkSNARK}}\xspace}

\newcommand{\ZPK}{\ensuremath{\mathsf{pk}}\xspace}
\newcommand{\ZVK}{\ensuremath{\mathsf{vk}}\xspace}

% credential
\newcommand{\Cred}{\ensuremath{\mathsf{Cred}}\xspace}

\newcommand{\Idx}{\ensuremath{\mathsf{idx}}\xspace}

\newcommand{\IS}{\ensuremath{\mathsf{IS}}\xspace}
\newcommand{\RL}{\ensuremath{\mathsf{RL}}\xspace}

%function splitting

\newcommand{\Auxiliary}{\ensuremath{\mathsf{aux}}\xspace}

% system roles
\newcommand{\Holder}{\ensuremath{\mathcal{H}}\xspace}
\newcommand{\Issuer}{\ensuremath{\mathcal{I}}\xspace}
\newcommand{\Verifier}{\ensuremath{\mathcal{V}}\xspace}
\newcommand{\Policy}{\ensuremath{pol}\xspace}

% security game
\newcommand{\Adv}{\ensuremath{\mathcal{A}}\xspace}

%credential & template

% keys
\newcommand{\IPK}{\ensuremath{pk}\xspace}
\newcommand{\ISK}{\ensuremath{sk}\xspace}

% ZK
\newcommand{\NIZK}{\ensuremath{\mathsf{ZKP}}\xspace}
\newcommand{\PT}{\ensuremath{\mathsf{pt}}\xspace}

%Signature
\newcommand{\Verify}{\ensuremath{\mathsf{Verify}}\xspace}

%primitives

\def\BibTeX{{\rm B\kern-.05em{\sc i\kern-.025em b}\kern-.08em
    T\kern-.1667em\lower.7ex\hbox{E}\kern-.125emX}}
\begin{document}
% \title{\Sysname: Efficiently Transforming Legacy Signatures into Anonymous Credentials}
\title{Achieving Flexible and Secure Authentication with Strong Privacy in Decentralized Networks}

\author{Bin~Xie,
        Rui~Song,
        Xuyuan~Cai\\
        Department of Computing\\
The Hong Kong Polytechnic University
%         and~Bin~Xiao,
% \IEEEmembership{~Fellow,~IEEE}
% \IEEEcompsocitemizethanks{
% \IEEEcompsocthanksitem  B. Xie, R. Song, X. Cai, and B. Xiao are with the Department of Computing, The Hong Kong Polytechnic University, Hong Kong, China. (\textit{Corresponding author: B.~Xiao}.)
% E-mails:~\{bindyy.xie, xuyuan.cai, rui-jon.song\}@connect.polyu.hk, and~b.xiao@polyu.edu.hk. }
}

% \IEEEauthorblockA{School of Electrical and\\Computer Engineering\\
% Georgia Institute of Technology\\
% Atlanta, Georgia 30332--0250\\
% Email: http://www.michaelshell.org/contact.html}
% \and
% \IEEEauthorblockN{Homer Simpson}
% \IEEEauthorblockA{Twentieth Century Fox\\
% Springfield, USA\\
% Email: homer@thesimpsons.com}
% \and
% \IEEEauthorblockN{James Kirk\\ and Montgomery Scott}
% \IEEEauthorblockA{Starfleet Academy\\
% San Francisco, California 96678-2391\\
% Telephone: (800) 555--1212\\
% Fax: (888) 555--1212}}

\maketitle

\begin{abstract}
Anonymous credentials (ACs) are a crucial cryptographic tool for privacy-preserving authentication in decentralized networks, allowing holders to prove eligibility without revealing their identity.
However, a major limitation of standard ACs is the disclosure of the issuer's identity, which can leak sensitive contextual information about the holder.
Issuer-hiding ACs address this by making a credential's origin indistinguishable among a set of approved issuers.
Despite this advancement, existing solutions suffer from practical limitations that hinder their deployment in decentralized environments: unflexible credential models that restrict issuer and holder autonomy, flawed revocation mechanisms that compromise security, and weak attribute hiding that fails to meet data minimization principles.
This paper introduces a new scheme called \Sysname to overcome these challenges.
We propose a flexible credential model that employs vector commitments with a padding strategy to unify credentials from heterogeneous issuers, enabling privacy-preserving authentication without enforcing a global static attribute set or verifier-defined policies.
Furthermore, we design a secure decentralized revocation mechanism where holders prove non-revocation by demonstrating their credential's hash lies within a gap in the issuer's sorted revocation list, effectively decoupling revocation checks from verifier policies while maintaining issuer anonymity.
\Sysname also strengthens attribute hiding by utilizing zk-SNARKs and vector commitments, allowing holders to prove statements about their attributes without disclosing the attributes themselves or the credential structure.
Security analysis and performance evaluations demonstrate its practical feasibility for decentralized networks, where presenting a credential can be finished in 1s.
\end{abstract}

\begin{IEEEkeywords}
authentication, anonymous credential, credential revocation, privacy protection, decentralized networks
\end{IEEEkeywords}

\section{Introduction}
\label{sec:introduction}

The rapid Internet evolution has enabled users to access diverse services and resources ubiquitously.
Authentication serves as the cornerstone for establishing trust between communicating entities~\cite{7811268,8721279}.
Currently, this process predominantly relies on Identity Providers (IdPs), such as Google and Facebook.
However, this centralized architecture invariably leads to the concentration of users' identity data and web footprints, exposing users to severe risks of large-scale privacy breaches and potential attacks targeting the availability of identity infrastructure.
Consequently, driven by the enforcement of privacy protection regulations, such as the General Data Protection Regulation (GDPR), establishing a reliable and privacy-preserving authentication mechanism in a decentralized network architecture has emerged as a research focus.

% Decentralized identity systems are being extensively promoted to solve the authentication problem.
% A typical decentralized identity ecosystem involves three distinct roles: issuers, holders, and verifiers.
% Trusted entities, such as government agencies, act as issuers to endorse identity attributes by issuing credentials to holders, who then present these credentials to verifiers to prove they satisfy specific access control requirements.
% Anonymous credentials serve as the core technology facilitating this process, allowing holders to disclose only necessary identity attributes while maintaining anonymity during authentication.
% However, traditional anonymous credential schemes suffer from a significant limitation.
% The verification of a credential presentation typically necessitates the public key of the issuer.
% Disclosing this information can inadvertently leak the holder's privacy.
% For instance, if the issuer is a specific university, hospital, or local government agency, revealing the issuer's identity implicitly exposes the holder's affiliation, location, or other sensitive personal contexts, thereby undermining the intended privacy protection of anonymous credentials.

To address these concerns, decentralized identity systems introduce a tripartite model comprising issuers, holders, and verifiers.
By leveraging verifiable credentials and anonymous credentials (ACs)~\cite{10705058,8635539}, holders can prove eligibility to verifiers while minimizing attribute disclosure.
However, a significant privacy limitation persists.
Standard verification protocols necessitate the disclosure of the issuer's public key.
This requirement inadvertently reveals the issuer's identity, such as a specific university or local government, thereby enabling the malicious verifiers to infer the holder's affiliation or location.
Such correlation attacks fundamentally undermine the privacy guarantees intended by anonymous credential schemes.

\begin{figure*}[htbp]
    \centering
    \includegraphics[width=0.8\textwidth]{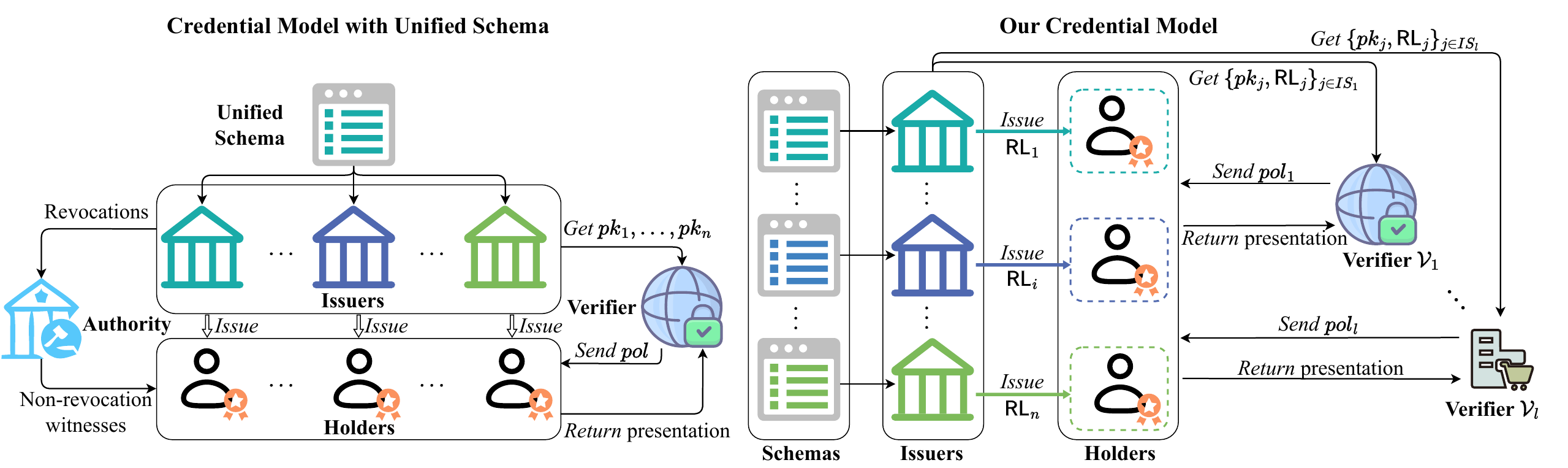}
    \caption{Unflexible credential model (left) and our new credential model (right).}
    \label{fig:model-comparison}
\end{figure*}

To resolve the above privacy concerns, one can employ \textit{issuer-hiding attribute-based credentials}, as formally introduced by Bobolz et al.~\cite{bobolz2021issuer}.
This technology allows a credential holder to produce a cryptographic token for authentication, demonstrating the credential's validity against a presentation policy that specifies an approved issuer set.
By design, the credential's specific origin, namely the issuer's public key, is rendered indistinguishable from those of other issuers in the designated set.
While several such credential systems~\cite{bobolz2021issuer,bosk2022hidden,mir2023aggregate,shi2023double,hui2025ring} have been developed, their practical application in decentralized networks is currently hindered by the following factors:
\begin{itemize}
	\item Unflexible credential model: Some schemes adopt one-attribute credentials~\cite{bosk2022hidden,mir2023aggregate}, causing high issuance overhead as each attribute demands separate identity check. Conversely, others~\cite{bobolz2021issuer,shi2023double,hui2025ring} mandate a uniform schema across all issuers (Fig.~\ref{fig:model-comparison}, left), restricting their autonomy to define context-specific attributes. This rigidity undermines the ``issue once, use everywhere'' paradigm, as a single schema cannot satisfy the diverse requirements of different verifiers. Furthermore, these schemes compel verifiers to pre-define rigid presentation policies, which limits holders' flexibility in issuer selection and complicates the verification workflow.
	\item Deficiencies in revocation: A credential revocation mechanism provides a means to invalidate credentials that have been compromised or whose conditions are no longer met. However, privacy-preserving revocation is a challenging feature to incorporate into issuer-hiding ACs because existing solutions expose the issuer's identity by requiring their unique revocation/allow lists as a public input during token verification~\cite{fueyo2016efficiency,camenisch2009accumulator}. A prior attempt to solve this issue~\cite{bobolz2021issuer} was found to be insecure: non-revocation witnesses retain validity even if the credential is revoked afterwards, permitting fraudulent non-revocation claims (see Section~\ref{subsec:strawman} for more details).
	\item Weak attribute hiding: Most issuer-hiding solutions often involve direct disclosure of the required attribute values~\cite{bosk2022hidden,mir2023aggregate,shi2023double}. Nevertheless, this practice is insufficient as it fails to meet the stringent requirements of attribute hiding and the data minimisation principle specified by laws such as GDPR. Attributes within a credential often encompass more information than is strictly necessary for verification, such as using a full birthdate for a simple age check. While these attributes individually may not be unique identifiers, they can act as quasi-identifiers. The combination of multiple such quasi-identifiers during an authentication process can uniquely pinpoint a holder, thus negating ACs' intended anonymity. 
\end{itemize}

The shortcomings of existing solutions motivate us to propose a new anonymous credential called \underline{I}ssuer-hiding and \underline{R}evocable \underline{A}nonymous \underline{C}redential (\Sysname) as sketched in Fig.~\ref{fig:model-comparison}(right). We now describe how \Sysname solves these issues in an efficient and decentralized way. 

To make the credential model flexible, we introduce an extensible attribute universe and holder-centric issuer set.
Unlike previous solutions that enforce a uniform schema, our approach allows each issuer to select a subset of attribute universe for issuance.
To conceal the differences in attribute sets and enable issuer-hiding, we employ a vector commitment scheme where issuers pad their assigned attribute values to a fixed-length vector using placeholders.
This produces a compact, fixed-size commitment for the credential, ensuring a unified format across heterogeneous issuers.
Besides, instead of requiring verifiers to define presentation policies, we empower holders to independently choose their issuer set during presentation generation and construct efficient credential presentations based on vector commitments.

To address the revocation status check, we propose a decentralized revocation mechanism where issuers maintain sorted lists of revoked credential hashes. Instead of relying on verifier-defined policies, the holder independently retrieves public keys and revocation lists from relevant issuers. The holder then aggregates these into a vector commitment and proves that their credential's hash lies between two adjacent elements in the issuer's committed sorted revocation list. This efficiently proves non-revocation without revealing the specific issuer or the full revocation list.

To achieve strong attribute hiding, we utilize the properties of vector commitments and zk-SNARKs. The holder proves knowledge of the credential commitment and a specific set of attribute index-value pairs required by the verifier.
By demonstrating that these pairs are valid elements within the committed vector and satisfy the verification predicate, the holder authenticates successfully without disclosing any unrequested attributes or specific attribute values in the credential, thus preserving strict data minimization.

In summary, we make the following contributions:

\begin{itemize}

    \item We propose \Sysname, a novel issuer-hiding credential scheme that supports an extensible attribute universe and holder-centric issuer selection. By employing vector commitments with a padding strategy, we unify credentials from heterogeneous issuers into a fixed-format commitment, enabling flexible and privacy-preserving authentication without enforcing a global static attribute set or verifier-defined policies.

    \item We design a secure decentralized revocation mechanism where holders prove non-revocation by demonstrating their credential's hash lies within a gap in the issuer's sorted revocation list. This approach effectively decouples revocation checks from verifier policies, prevents replay attacks, and maintains issuer anonymity.

    \item We incorporate strong attribute hiding capabilities using zk-SNARKs and vector commitments, allowing holders to prove predicate satisfaction without disclosing unrequested attributes or specific values. 
    
    \item We provide a rigorous security analysis and performance evaluation, demonstrating \Sysname's practicality for decentralized identity systems.
    
\end{itemize}

The rest of the paper is organized as follows. Section~\ref{sec:preliminary} introduces some preliminary knowledge. Section~\ref{sec:overview} gives an overview of \Sysname and is followed by the formal definition in Section~\ref{sec:sys-definition}. We then give a concrete design and security analysis in Section~\ref{sec:protocol}. Section~\ref{sec:evaluation} conducts the evaluation. Finally, we review the literature in Section~\ref{sec:related-work} and conclude this paper in Section~\ref{sec: conclusion}.

% \newpage
\section{Preliminary}
\label{sec:preliminary}

This section introduces the notations and cryptographic primitives utilized throughout this paper.

\subsection{Notations}
We use bold symbols such as $\vec{a}$ to denote vectors and $\vec{a}[i]$ to denote the $i$-th element of vector $\vec{a}$.
% , where indices start from $0$.
We use $||$ to denote the concatenation operation over two bit strings.
% $[n]$ denotes the set of integers from $0$ to $n-1$.
For a non-deterministic polynomial time (NP) relation $\mathcal{R}$, we write $(\mathsf{x};\mathsf{w})\in \mathcal{R}$ to indicate that $\mathsf{w}$ is a witness for the public statement $\mathsf{x}$. PPT stands for ``probabilistic polynomial time".

\subsection{Digital Signatures}
A digital signature scheme $\Sigma$ typically consists of the following three algorithms~\cite{10.1007/978-3-319-96881-0_4}:

\begin{itemize}
    \item $\MyPP\sample \Sigma.\mathsf{Setup}(1^\lambda) $: The setup algorithm takes a security parameter $1^\lambda$ as input and outputs public parameters $\MyPP$, which are implicit inputs to the following algorithms.
	\item $(sk,pk) \sample \Sigma.\mathsf{KeyGen}(1^\lambda)$: The key generation algorithm takes a security parameter $1^\lambda$ as input and outputs a secret/public key pair $(sk,pk)$.
	\item $\sigma\gets \Sigma.\mathsf{Sign}(sk,m)$: The signing algorithm takes the secret key $sk$ and a message $m$ as input and outputs a signature $\sigma$.
	\item $b\gets \Sigma.\mathsf{Verify}(pk,m,\sigma)$: The verification algorithm takes the public key $pk$, a message $m$, and a signature $\sigma$ as input, and outputs a bit $b$ indicating the verification result.
\end{itemize}

A secure digital signature scheme must be existentially unforgeable under chosen message attacks (EUF-CMA), even when the adversary is allowed to query a signing oracle on messages of its choice.

\begin{definition}[EUF-CMA]
	A digital signature scheme is EUF-CMA secure if and only if for any PPT adversary \Adv, there exists a negligible function $\mathsf{negl}$ such that $\Pr\left [ \mathsf{Exp}^{\mathsf{EUF-CMA}}_{\Adv}(\lambda)=1\right ]\leq \mathsf{negl}(\lambda)$, where the experiment $\mathsf{Exp}^{\mathsf{EUF-CMA}}_{\Adv}(\lambda)$ is defined in Fig.~\ref{fig:EUF-CMA}.
\end{definition}

\begin{figure}[htbp]
        \centering
        \begin{pcvstack}[boxed, center, space=1em]
            \procedure[linenumbering, width=0.45\columnwidth]{$\mathsf{Exp}^{\mathsf{EUF-CMA}}_{\Adv}(\lambda)$:}{
            \text{$\MyPP\sample \Sigma.\mathsf{Setup}(1^\lambda)$}\\
             \text{$(sk,pk)\sample \Sigma.\mathsf{KeyGen}(1^\lambda)$} \\
            \text{$\mathcal{Q}\gets \emptyset,(m^*,\sigma^*)\gets \Adv^{\mathcal{O}_{\mathsf{Sign}}}(pk)$}\\
            \text{return $1$ if $\Sigma.\mathsf{Verify}(pk,m^*,\sigma^*)=1 \wedge m^*\notin \mathcal{Q}$}\\
            \text{return $0$}
            }
            \procedure[width=0.4\columnwidth]{$\mathcal{O}_{\mathsf{Sign}}(m)$:}{
                \pcln \text{$\sigma\gets \Sigma.\mathsf{Sign}(sk,m),\mathcal{Q}\gets \mathcal{Q}\cup \{m\}$} \\ 
                \pcln \text{return $\sigma$} 
            }    
        \end{pcvstack}
        \caption{The EUF-CMA experiment for $\Sigma$.}
        \label{fig:EUF-CMA}
    \end{figure}

\subsection{zk-SNARK}

A zk-SNARK enables a prover to convince a verifier that it possesses certain information (i.e., a witness) without revealing any information about it~\cite{chen2023reviewzksnarks}. A zk-SNARK scheme $\Pi$ consists of the following algorithms:

\begin{itemize}
	\item $\MyPP \sample \Pi.\mathsf{Setup}(1^{\lambda})$: The setup algorithm takes a security parameter $1^\lambda$ as input and outputs public parameters $\MyPP$.
	\item $(\ZPK,\ZVK)\sample \Pi.\mathsf{KeyGen}(\MyPP,\mathcal{R})$: The key generation algorithm takes public parameters $\MyPP$ and an NP relation $\mathcal{R}$ as input, and outputs a proving/verifying key pair $(\ZPK,\ZVK)$.
	\item $\pi \gets \Pi.\mathsf{Prove}(\ZPK, \mathsf{x},\mathsf{w})$: The proving algorithm takes the proving key $\ZPK$, a statement $\mathsf{x}$, and a witness $\mathsf{w}$ as input, and outputs a succinct proof $\pi$.
	\item $b\gets \Pi.\mathsf{Verify}(\ZVK,\mathsf{x},\pi)$: The verification algorithm takes the verifying key $\ZVK$, a statement $\mathsf{x}$, and a proof $\pi$ as input, and outputs a bit $b\in \{0,1\}$ indicating the verification result.
\end{itemize}

A zk-SNARK scheme must satisfy the properties of \emph{completeness}, \emph{knowledge soundness}, \emph{zero-knowledge}, and \emph{succinctness}.

\begin{definition}[Perfect Completeness] 
A zk-SNARK scheme $\Pi$ is perfectly complete if for any PPT adversary \Adv, the following holds:
\begin{footnotesize}
\begin{equation}
\nonumber
    \Pr \left[
    \begin{array}{r}
        \MyPP\gets \Pi.\mathsf{Setup}(1^\lambda) \\
        (\mathcal{R}, (\mathsf{x}, \mathsf{w})) \gets \Adv(\MyPP) \\
        (\mathsf{ x}, \mathsf{w}) \in \mathcal{R} \\
        (\ZPK,\ZVK)\gets \Pi.\mathsf{KeyGen}(\mathcal{R})  \\
        \pi\gets \Pi.\mathsf{Prove}(\ZPK,\mathsf{x},\mathsf{w})
    \end{array}
    :
    \Pi.\mathsf{Verify}(\ZVK, \mathsf{x}, \pi) = 1 
    \right] = 1 
\end{equation}
\end{footnotesize}  
\end{definition}

\begin{definition}[Knowledge Soundness] 
A zk-SNARK scheme $\Pi$ satisfies knowledge soundness if for all PPT adversaries \Adv, there exists a PPT extractor $\mathcal{E}$ such that:
\begin{footnotesize}
\begin{equation}
 \nonumber
    \Pr \left[ \!\!\!
    \begin{array}{r}
    \MyPP\gets \Pi.\mathsf{Setup}(1^\lambda)  \\
        (\ZPK,\ZVK) \gets \Pi.\mathsf{KeyGen}(\mathcal{R})\\
        (\mathsf{x}, \pi)\gets \mathcal{A}(\MyPP,\ZPK,\ZVK,\mathcal{R}) \\
        \mathsf{w} \gets \mathcal{E}(\ZPK,\ZVK,\mathsf{x},\pi)
    \end{array}
    \!\!\! : \!\!\!
    \begin{array}{r}
        \Pi.\mathsf{Verify}(\ZVK, \mathsf{x}, \pi) = 1 \\
        (\mathsf{x}, \mathsf{w}) \notin \mathcal{R}
    \end{array}
    \!\!\! \right] \leq \mathsf{negl}(\lambda)
\end{equation}
\end{footnotesize}
\end{definition}

\begin{definition}[Succinctness] 
A zk-SNARK scheme $\Pi$ is succinct if the size of the proof $\pi$ is polylogarithmic in the size of the witness $w$.
\end{definition}

\begin{definition}[Zero-Knowledge] 
A zk-SNARK scheme $\Pi$ is zero-knowledge if for any NP relation $\mathcal{R}$ and $(\mathsf{x},\mathsf{w})\in \mathcal{R}$, there exists a PPT simulator $\mathcal{S}$ such that for all PPT adversaries \Adv, the following holds:
\begin{footnotesize}
\begin{align}
    &\Pr\left[
    \begin{array}{r}
        \MyPP \gets \Pi.\mathsf{Setup}(1^\lambda)\\
        (\ZPK,\ZVK)\gets \Pi.\mathsf{KeyGen}(\mathcal{R}) \\
        \pi \gets \Pi.\mathsf{Prove}(\ZPK,\mathsf{x},\mathsf{w})
    \end{array}
    :
    \Adv(\MyPP,(\ZPK,\ZVK),\pi)=1
    \right] \nonumber =  \\
    &\Pr\left[
    \begin{array}{r}
        \MyPP\gets \Pi.\mathsf{Setup}(1^\lambda) \\
        (\ZPK,\ZVK)\gets \Pi.\mathsf{KeyGen}(\mathcal{R}) \\
        \pi\gets \mathcal{S}(\MyPP,\mathcal{R},\mathsf{x})
    \end{array}
    :
    \Adv(\MyPP,(\ZPK,\ZVK),\pi)=1
    \right] \nonumber
\end{align}
\end{footnotesize}
\end{definition}

\subsection{Vector Commitment}

A vector commitment scheme $\Theta$ consists of the following algorithms~\cite{cryptoeprint:2025/667}:

\begin{itemize}
	\item $\MyPP \sample \Theta.\mathsf{Setup}(1^\lambda,n)$: The setup algorithm takes a security parameter $1^\lambda$ and the vector length $n$ as input, and outputs public parameters $\MyPP$ for the subsequent algorithms.
	\item $c \gets \Theta.\mathsf{Commit}(\vec{a})$: The commitment algorithm takes a vector $\vec{a}$ as input and outputs a commitment $c$.
	\item $\pi_i \gets \Theta.\mathsf{Open}(\vec{a},i,y)$: The opening algorithm takes a vector $\vec{a}$ and its $i$-th element $y$ as input and outputs a proof $\pi_i$.
	\item $b\gets \Theta.\mathsf{Verify}(c,i,y,\pi_i)$: The verification algorithm takes the commitment $c$, the $i$-th element $y$, and a proof $\pi_i$ as input, and outputs a bit $b=1$ (accept) or $b=0$ (reject).
\end{itemize}

The vector commitment scheme $\Theta$ must satisfy \emph{correctness} and \emph{binding}.

\begin{definition}[Correctness]
A vector commitment scheme $\Theta$ is correct if the following holds:
\begin{footnotesize}
	\begin{equation}
		\nonumber
		\Pr \left [
		\begin{array}{r}
			\MyPP \sample \Theta.\mathsf{Setup}(1^\lambda,n)\\
			c\gets \Theta.\mathsf{Commit}(\vec{a})\\
			\pi_i \gets \Theta.\mathsf{Open}(\vec{a},i,\vec{a}[i])
		\end{array}
		:
		\Theta.\mathsf{Verify}(c,i,\vec{a}[i],\pi_i)=1
		\right ]=1
	\end{equation}
\end{footnotesize}
\end{definition}

\begin{definition}[Binding]
A vector commitment scheme $\Theta$ is binding if, for $\MyPP \gets \Theta.\mathsf{Setup}(1^\lambda,n)$ and any PPT adversary $\adv$ that outputs $(c,i,y_0,y_1,\pi_0,\pi_1)$, the following holds:
\begin{footnotesize}
		\begin{equation}
			\nonumber
			\Pr \left[
			\begin{array}{r}
				y_0\neq y_1\\
				\Theta.\mathsf{Verify}(c,i,y_0,\pi_0)=1\\
				\Theta.\mathsf{Verify}(c,i,y_1,\pi_1)=1
			\end{array}
			\right]\leq \mathsf{negl}(\lambda)
		\end{equation}
	\end{footnotesize}
\end{definition}

%\clearpage
\section{Overview}
\label{sec:overview}
This section first introduces the problem that \Sysname aims to address. Next, we present the baseline solution and discuss its limitations. Finally, we provide a high-level overview of how \Sysname overcomes these challenges.

\subsection{Problem Statement}

We consider the authentication problem in decentralized networks, where each entity may act as a credential issuer, holder, or verifier. An issuer $\Issuer_i$ issues a credential $\Cred_{i,j}$ to a holder $\Holder_j$ after verifying its identity. The holder $\Holder_j$ can then access the services of a target verifier $\Verifier_k$ if the attributes of $\Cred_{i,j}$ satisfy an access criterion specified by a predicate $\phi_{k}$ and the credential passes verification under the issuer's public key $\IPK_i$.

Beyond authentication security, we focus on privacy protection and credential revocation. The system should reveal nothing beyond the predicate $\phi_k$ and prove that the presented credential is not on the issuer's revocation list, all while maintaining anonymity. Additionally, the authentication process must ensure \emph{issuer hiding}, preventing unintended disclosure of the issuer's identity.

\subsection{Baseline Solution \& Limitations}
\label{subsec:strawman}

We begin by outlining the main idea of CANS'21~\cite{bobolz2021issuer}, which formally defines \emph{issuer hiding} and provides a concrete cryptographic construction for the problem described above. We then analyze its limitations in practical deployment, which this paper seeks to address.

\spar{Baseline Solution} The protocol in CANS'21 starts with a setup phase that initializes system parameters $\MyPP$, implicitly specifies an attribute set $\vec{\mathbb{A}}$, and selects a signature scheme $\Sigma_\Issuer$. Each issuer $\Issuer_i$ generates its public-secret key pair $(\ISK_i,\IPK_i)$ based on $\MyPP$ and $\Sigma_\Issuer$.

During credential issuance, $\Issuer_i$ assigns attribute values $\vec{a}_{i,j}$ according to $\vec{\mathbb{A}}$ for holder $\Holder_j$ and signs $(\vec{a}_{i,j},rh_{i,j},rpk_i)$, where $rh_{i,j}$ is a unique revocation handler and $(rsk_i,rpk_i)$ is a key pair held by $\Issuer_i$ using signature scheme $\Sigma_{\mathcal{R}}$. $\Issuer_i$ also creates two signatures $\sigma_{a}$ and $\sigma_{b}$ for $rh_{i,j}\in [a,b]$, where $[a,b]$ denotes the interval of unrevoked credentials.
The verifier $\Verifier_k$ defines a presentation policy $\Policy_k$ in three steps: \cirn{172} selects a set of issuers $\{\Issuer_i\}_{i\in \IS}$ where $\IS$ is an issuer index set; \cirn{173} generates a key pair $(vsk_v,vpk_v)$ using $\Sigma_\Verifier$; \cirn{174} signs each $\IPK_i$ in $\{\IPK_i\}_{i\in \IS}$ with $vsk_v$ to produce a presentation policy $\Policy_k=\{\sigma_{i,v}\}_{i\in \IS}$.

Given the credential $\Cred_{i,j}$ from $\Issuer_{i}$ and the presentation policy $\Policy_k$ from $\Verifier_k$, the holder $\Holder_j$ derives a credential presentation for predicate $\phi_k$ under a session identifier \Ctx using a non-interactive ZKP scheme as follows:
\begin{equation}
\nonumber
	\begin{aligned}
		\PT\gets&\NIZK\{((vpk_k,\phi_k);(\vec{a}_{i,j},rh_{i,j},rpk_i,\sigma_{i,j},\IPK_{i},\sigma_{i,v},\\
		&a,b,\sigma_a,\sigma_b)):\phi_k(\vec{a}_{i,j})=1\wedge a \leq rh_{i,j}\leq b\\
		& \wedge \Sigma_\Issuer.\Verify(\IPK_i,(\vec{a}_{i,j},rh_{i,j},rpk_i),\sigma_{i,j})=1\\
		&\wedge \Sigma_{\Verifier}.\Verify(vpk_k,\IPK_i,\sigma_{i,v})=1\\
		&\wedge \Sigma_{\mathcal{R}}.\Verify(rpk_i,a,\sigma_{a})=1\\
		&\wedge \Sigma_{\mathcal{R}}.\Verify(rpk_i,b,\sigma_{b})=1
		\}(\Ctx).
	\end{aligned}
\end{equation}
\begin{comment}
\begin{equation}
\nonumber
	\begin{aligned}
		\PT\gets&\NIZK\{((vpk_k,\phi_k);(\vec{a}_{i,j},\sigma_{i,j},\IPK_{i},\sigma_{i,v})):\\
		&\phi_k(\vec{a}_{i,j})=1\wedge \Sigma_\Issuer.\Verify(\IPK_i,\vec{a}_{i,j},\sigma_{i,j})=1\\
		&\wedge \Sigma_{\Verifier}.\Verify(vpk_k,\IPK_i,\sigma_{i,v})=1
		\}(\Ctx).
	\end{aligned}
\end{equation}
\end{comment}

Utilizing \NIZK with \Ctx as the part of the input of the hash function used in Fiat-Shamir heuristic~\cite{10.1007/3-540-47721-7_12}, the holder can prove that its credential $\Cred_{i,j}$ is valid under a hidden public key $\IPK_i$ from an issuer in $\{\Issuer_i\}_{i\in \IS}$, without revealing attribute information beyond $\phi_k$, and that the credential remains unrevoked.

\spar{Limitations} The primary limitation is the requirement for global consensus on credential attributes among all issuers. In practice, issuers are heterogeneous and require support for customizable attributes rather than a static, universal set. This rigidity severely limits scalability: adding new attributes for authentication necessitates complete system re-initialization, which is impractical for large-scale deployments.

Secondly, the scheme requires each verifier to designate an issuer set and create a presentation policy. Given the dynamic nature of issuers in decentralized systems, verifiers must continuously update these policies, and holders must synchronize with the latest versions. This architecture imposes significant operational overhead and effectively centralizes control at the verifier, restricting the holder's flexibility in selecting a hiding set of issuers.

Finally, the revocation mechanism is vulnerable. Since the public key $rpk_i$ for revocation is embedded in the credential and used as a witness in \NIZK alongside $a$, $\sigma_a$, $b$, and $\sigma_b$, the scheme is susceptible to replay attacks. Specifically, even if a credential identified by $rh_{i,j}$ is revoked, a malicious holder can reuse previously obtained data to generate a valid presentation. A more robust design is needed to decouple revocation from credential issuance and enable issuer-hiding revocation checks during authentication.

\subsection{Technical Overview}
\label{subsec:technique}

\begin{figure*}[htbp]
    \centering
    \includegraphics[width=\textwidth]{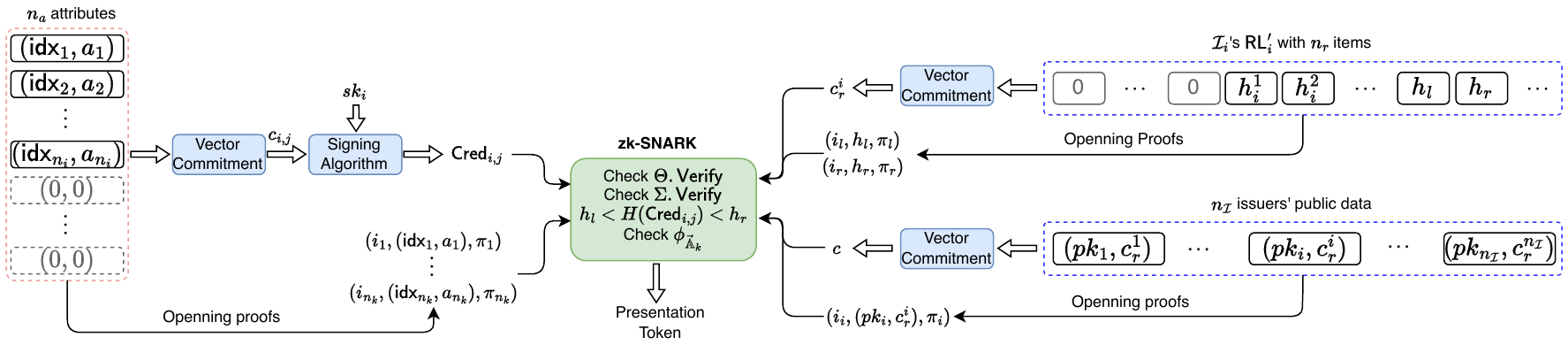}
    \caption{The workflow of \Sysname.}
    \label{fig:workflow}
\end{figure*}

In contrast to the baseline solution, which relies on a fixed and static attribute set, \Sysname introduces an extensible attribute universe $\mathbb{U}$.
This universe defines the set of identity attributes available for authentication and supports the dynamic addition of new attributes as the decentralized network evolves. Each attribute $\mathbb{A}_a$ is uniquely identified by an index $\mathsf{idx}_a$.
Accordingly, each issuer $\Issuer_i$ selects a subset of attributes $\vec{\mathbb{A}}_i$\footnote{We use $\vec{a}_{i,j}\in \vec{\mathbb{A}}_i$ to denote $\vec{a}_{i,j}$ is sampled from the value space of $\vec{\mathbb{A}}_i$.} from $\mathbb{U}$ and issues credentials to holders based on its chosen subset.

To address the heterogeneity of attribute sets among issuers and enable issuer-hiding authentication, it is necessary to construct a universal credential format that conceals differences in attribute cardinality.
To this end, $\Issuer_i$ pads its assigned attribute values to a fixed-length vector using placeholder values (0,0), where each meaningful element is an attribute index-value pair $(\mathsf{idx}_a, v_a)$.
The issuer then employs a vector commitment scheme $\Theta$ to commit to this vector, producing a compact, fixed-size commitment $c_a^{i,j}$, which is subsequently signed to issue a credential to $\Holder_j$.
For revocation, the issuer maintains a sorted list of revoked credential hash values, computed using a cryptographic hash function $H$.

During authentication, the holder $\Holder_j$ must demonstrate three things to the verifier $\Verifier_k$: 
\cirn{172} the attribute values satisfy the predicate $\phi_{\vec{\mathbb{A}}_v}$ which defines over a set of attributes $\vec{\mathbb{A}}_v$;
\cirn{173} the credential is valid under a public key $\IPK_{i}$ from a set of issuers;
\cirn{174} the credential has not been revoked by the corresponding issuer.

The first requirement is straightforward: $\Holder_j$ proves knowledge of a commitment $c_a^{i,j}$ and a set of attribute index-value pairs $\{(\mathsf{idx}_o, v_o)\}_{\mathbb{A}_o \in \vec{\mathbb{A}}_v}$ that are valid vector elements committed in $c_a^{i,j}$, and that these attribute values satisfy $\phi_{\vec{\mathbb{A}}_k}$. This can be efficiently realized using zk-SNARK and vector commitment proofs, without revealing information about any other attributes.

For the second and third requirements, we propose a unified proof approach to enhance scalability and eliminate the need for verifier-defined presentation policies. Instead, $\Holder_j$ independently retrieves a set of public keys $\{\IPK_i\}_{i \in \IS}$ and corresponding revocation lists $\{\RL_i\}_{i \in \IS}$ from issuers that endorse all required attributes, i.e., $\vec{\mathbb{A}}_v\subseteq \vec{\mathbb{A}}_i$ for all $i \in \IS$.
Analogous to attribute commitments, $\Holder_j$ uses $\Theta$ to commit to each padded revocation list $\RL^\prime_i$, generating a commitment $c_r^i$.
Subsequently, $\Holder_j$ aggregates $\{(\IPK_i, c_r^i)\}_{i \in \IS}$ into a vector commitment $c$ using $\Theta$.

Thus, $\Holder_j$ only needs to prove the existence of a tuple $(\IPK_i, c_r^i, \sigma_{i,j})$ and two elements $h_l, h_r$ such that: \cirn{172} $(\IPK_i, c_r^i)$ is committed in $c$; \cirn{173} $\IPK_i$ verifies $c_a^{i,j}$ and $\sigma_{i,j}$; \cirn{174} the hash value $H(\Cred_{i,j})$ lies between $h_l$ and $h_r$, where $h_l$ and $h_r$ are valid adjacent elements committed in $c_r^i$. Since $\RL_i^\prime$ is a sorted list (with padded placeholders set to the minimum possible value $0$), an element lies between two adjacent elements if and only if it is not included in the list, thereby confirming that the credential has not been revoked.

\begin{comment}

In summary, the credential presentation \PT in \Sysname is constructed as follows:
\begin{equation}
\nonumber
	\begin{aligned}
		\PT\gets&\NIZK\{((c,\phi_k);((c^{i,j}_a,\sigma_{i,j}),(\IPK_i,i_i,c_i^r,\pi_i),\\
		& \{\mathsf{idx}_o,v_o,i_o,\pi_o\}_{o\in S_v},(h_l,i_l,\pi_l,h_r,i_r,\pi_r))) \\
		&\wedge_{o\in S_v} \Theta.\Verify(c_a^{i,j},i_o,(\mathsf{idx}_o,v_o),\pi_o)=1 \\
		&\wedge \phi_{k}(\{v_o\}_{o\in S_v})=1\\
		&\wedge \Sigma_\Issuer.\Verify(\IPK_i,c_a^{i,j},\sigma_{i,j})=1 \\
		&\wedge \Theta.\Verify(c,i_i,(\IPK_i,c_i^r),\pi_i)=1 \\
		&\wedge \Theta.\Verify(c_r^i,i_l,h_l,\pi_{l})=1\\
		&\wedge \Theta.\Verify(c_i^r,i_r,h_r,\pi_{r})=1\\
		&\wedge i_l+1=i_r\\
		& \wedge h_l<H(c_a^{i,j},\sigma_{i,j})<h_r
		\}(\Ctx).
	\end{aligned}
\end{equation}
\end{comment}

% Upon receiving these proofs and \IS, $\Verifier_k$ reconstructs $c$ as performed by $\Holder_j$ and verifies them accordingly. 
The above mechanisms enable \Sysname to support heterogeneous and extensible attribute sets, eliminate the need for presentation policies, and provide secure revocation checks without distributing non-revocation witnesses.
Fig.~\ref{fig:workflow} illustrates the workflow of \Sysname. The formal definition of \Sysname is deferred to Section~\ref{sec:sys-definition}, with further details provided in Section~\ref{sec:protocol}.

% and $\Cred_{i,j} = (c_a^{i,j}, \sigma_{i,j})$ such that $\Sigma_\Issuer.\Verify(\IPK_i, c_a^{i,j}, \sigma_{i,j}) = 1 \wedge \Theta.\Verify(c, i_i, (\IPK_i, c_r^i), \pi_i) = 1 \wedge \Theta.\Verify(c_r^i, i_l, h_l, \pi_{l}) = 1 \wedge \Theta.\Verify(c_r^i, i_r, h_r, \pi_{r}) = 1 \wedge i_l + 1 = i_r \wedge h_l < H(c_a^{i,j}, \sigma_{i,j}) < h_r$, where $(i_i, \pi_i)$ is the vector commitment proof for $(\IPK_i, c_r^i)$ in $c$, and $(i_l, i_r, \pi_l, \pi_r)$ are proofs for two adjacent elements $(h_l, h_r)$ in $c_r^i$, ensuring that the hash of the credential $H(c_a^{i,j}, \sigma_{i,j})$ falls between $h_l$ and $h_r$.
% , together with vector commitment proofs $\{i_o, \pi_o\}_{o \in S_v}$, such that $\wedge_{o \in S_v} \Theta.\Verify(c_a^{i,j}, i_o, (\mathsf{idx}_o, v_o), \pi_o) = 1$ and $\phi_{k}(\{v_o\}_{o \in S_v}) = 1$.

%\clearpage

\section{Formal Definition}
\label{sec:sys-definition}
In this section, we present the formal definitions for the proposed \Sysname. We begin by describing the system model, which is followed by the syntax of \Sysname and its security definitions.

\subsection{System Model}

In \Sysname, the system consists of three primary types of entities that interact within the system:

\begin{itemize}
	\item \textbf{Issuers}: Trusted authorities responsible for performing identity checks and issuing credentials to holders. A distinguishing feature of issuers in \Sysname is their ability to support heterogeneous attribute sets.
	\item \textbf{Holders}: Entities that receive credentials from issuers and subsequently use them to authenticate to verifiers. They are free to choose which issuers to involve in each authentication session for issuer hiding, allowing for flexible and user-centric authentication processes.
	\item \textbf{Verifiers}: Entities that interact with holders to perform authentication. Verifiers must specify their access criteria in advance and determine the eligibility of a holder during the authentication process based on the credential presentation.
\end{itemize}

\spar{Adversary Model} We assume that issuers are trusted and honestly bind the attributes of holders into credentials. This assumption is justified by the design goal of \Sysname, which is to enable flexible, user-chosen issuer sets during authentication, and by the fact that issuers are not required to be aware of their involvement in any specific authentication session.

We consider two types of adversaries in the system:
\begin{itemize}
    \item \textbf{Malicious holders}: These adversaries attempt to subvert the authentication process by forging presentation tokens or bypassing honest verifiers, thereby compromising the security of the system.
    \item \textbf{Malicious verifiers}: These adversaries aim to extract more information from holders than is necessary for authentication, such as learning additional attribute values or linking multiple presentations to the same holder, thus violating privacy guarantees.
\end{itemize}

\subsection{Syntax}

Building on the analysis in Section~\ref{subsec:technique}, we formally define the syntax of \Sysname as a suite of algorithms and protocols that collectively realize the system's functionality.

\begin{definition}[\Sysname]
An \Sysname scheme is defined as a tuple of protocols $\Omega=(\MySetup,\AddAttributes, \IssuerSetup,\allowbreak \IssueCred,\Revoke,\PresentCred,\MyVerify)$, specified as follows:

\begin{itemize}
	\item $(\MyPP,\mathbb{U}) \gets \MySetup(1^\lambda,n_a,n_r,n_\Issuer)$: This setup algorithm takes as input a security parameter $1^\lambda$, the maximum number of attributes $n_a$, the maximum number of revocations $n_r$, the maximum size of a issuer set $n_\Issuer$. It outputs the system public parameters $\MyPP$ and an initial attribute universe $\mathbb{U}$, which serves as the global registry of supported attributes.
	\item $\mathbb{U}^\prime \gets \AddAttributes(\mathbb{U},\vec{\mathbb{A}})$: This algorithm updates the attribute universe by incorporating a new attribute set $\vec{\mathbb{A}}$, resulting in an updated universe $\mathbb{U}^\prime$ that reflects the evolving needs of the system.
	\item $(\ISK_i,\IPK_i,\vec{\mathbb{A}}_i,\Auxiliary_i) \gets \IssuerSetup(1^\lambda,\mathbb{U})$: Given the security parameter and the current attribute universe, this algorithm generates an issuer's secret key $\ISK_i$, public key $\IPK_i$, the subset of supported attributes $\vec{\mathbb{A}}_i \subseteq \mathbb{U}$, and initializes an empty revocation auxiliary data $\Auxiliary_i$.
	\item $\Cred_{i,j} \gets \IssueCred(\ISK_i,\vec{a}_{i,j},\vec{\mathbb{A}}_i)$: This credential issuance algorithm takes as input the issuer's secret key, a vector of attribute values $\vec{a}_{i,j}$ corresponding to $\vec{\mathbb{A}}_i$, and outputs a credential $\Cred_{i,j}$ for the holder.
	\item $b\gets \VerifyCred(\Cred_{i,j},\IPK_i,\vec{\mathbb{A}}_i)$: This verification algorithm takes as input a credential, the issuer's public key, and the subset of supported attributes, and outputs a bit $b$ indicating the validity of the credential.
	\item $\Auxiliary_i^\prime \gets \Revoke(\Auxiliary_i,\Cred_{i,j})$: This revocation algorithm updates the issuer's revocation auxiliary data $\Auxiliary_i$ by adding the specified credential $\Cred_{i,j}$, producing an updated auxiliary data $\Auxiliary_i^\prime$.
	\item $\PT \gets \PresentCred(\Cred_{i,j},\phi_{\vec{\mathbb{A}}_k},\Ctx,\{\IPK_p,\vec{\mathbb{A}}_p,\Auxiliary_p\}_{p \in \IS})$: The presentation algorithm allows a holder to generate a presentation token $\PT$ using their credential $\Cred_{i,j}$, a predicate $\phi_{\vec{\mathbb{A}}_k}$ that defines the access criteria, a context $\Ctx$ for the authentication session, and the relevant issuer information for the selected issuer set $\IS$.
	\item $b \gets \MyVerify(\PT, \phi_{\vec{\mathbb{A}}_k},\Ctx, \{\IPK_p,\vec{\mathbb{A}}_p,\Auxiliary_p\}_{p \in \IS})$: The verification algorithm takes as input a presentation token, the predicate, the context, and the relevant issuer information, outputting a bit $b$ that indicates whether the presentation is valid.
\end{itemize}
\end{definition}

\subsection{Security Definitions}

To rigorously capture the security guarantees of \Sysname, we define a set of oracles $(\mathcal{O}_{\mathsf{Add}},\mathcal{O}_{\mathsf{Issue}},\mathcal{O}_{\mathsf{Revoke}},\mathcal{O}_{\mathsf{Present}},\mathcal{O}_{\mathsf{Reveal}})$, as illustrated in Fig.~\ref{fig:oracles}. These oracles model the capabilities and knowledge accessible to an adversary during security experiments, reflecting realistic attack surfaces in the system.

\begin{figure}[htbp]
        \centering
        \begin{pcvstack}[boxed, center, space=1em]
            \procedure[linenumbering, width=0.45\columnwidth]{$\mathcal{O}_{\mathsf{Add}}(\vec{\mathbb{A}})$:}{
                \text{$\mathbb{U}^\prime \gets \AddAttributes(\mathbb{U},\vec{\mathbb{A}})$} \\
                \text{update $\mathbb{U}$ to $\mathbb{U}^\prime$}
            }
            \procedure[linenumbering, width=0.45\columnwidth]{$\mathcal{O}_{\mathsf{Issue}}(i,j,\vec{a}_{i,j})$:}{
            \text{$\Cred_{i,j}\gets \IssueCred(\ISK_i,\vec{a}_{i,j},\vec{\mathbb{A}}_i)$} \\
            \text{add $(i,j,\vec{a}_{i,j})$ to $\mathcal{Q}_{\mathsf{Issue}}$}
            }
            \procedure[linenumbering, width=0.2\columnwidth]{$\mathcal{O}_{\mathsf{Revoke}}(i,j)$:}{
                \text{add $\Cred_{i,j}$ to $\mathcal{Q}_{\mathsf{Revoke}}$} \\ 
                \text{return $\Auxiliary_i^\prime \gets \Revoke(\Auxiliary_i,\Cred_{i,j})$} 
            }
            \procedure[linenumbering, width=0.2\columnwidth]{$\mathcal{O}_{\mathsf{Present}}(i,j,\phi_{\vec{\mathbb{A}}_k},\Ctx,\IS)$:}{
                \text{add $(\phi_{\vec{\mathbb{A}}_k},\Ctx,\IS)$ to $\mathcal{Q}_{\mathsf{Present}}$} \\ 
                \text{$\PT \gets \PresentCred(\Cred_{i,j},\phi_{\vec{\mathbb{A}}_k},\Ctx,\{\IPK_i,\vec{\mathbb{A}}_i,\Auxiliary_i\}_{i \in \IS})$}\\
                \text{return $\PT$} 
            }
            \procedure[width=0.4\columnwidth]{$\mathcal{O}_{\mathsf{Reveal}}(i,j)$:}{
                \pcln \text{add $(i,j)$ to $\mathcal{Q}_{\mathsf{Reveal}}$} \\ 
                \pcln \text{return $\Cred_{i,j}$} 
            }    
        \end{pcvstack}
        \caption{The oracles that the adversary can access during security experiments.}
        \label{fig:oracles}
    \end{figure}

We now formalize the essential security properties that \Sysname must satisfy:

\emph{Correctness}: This property ensures that if all parties follow the protocol honestly, any presentation token generated by a holder will be accepted by the verifier. For brevity, we omit the formal definition, as it follows standard correctness requirements in cryptographic protocols.

\emph{Unforgeability}: This property guarantees that no malicious holder (i.e., adversary \Adv) can produce a valid presentation token that will be accepted by an honest verifier, except in cases where the adversary has previously obtained a unrevoked credential $\Cred_{i,j}$ satisfying $\phi_{\vec{\mathbb{A}}_k}$ or a presentation token for the same $(i,j,\phi_{\vec{\mathbb{A}}_k},\Ctx,\IS)$.  This ensures that credentials and presentations cannot be forged.

\begin{figure}[htbp]
        \centering
        \begin{pcvstack}[boxed, center, space=1em]
            \procedure[linenumbering, width=0.45\columnwidth]{$\mathsf{Exp}^{\mathsf{Unforgeability}}_{\Adv}(\lambda,n_a,n_r,n_\Issuer)$:}{
                \text{$(\MyPP,\mathbb{U}) \gets \Omega.\MySetup(1^\lambda,n_a,n_r,n_\Issuer)$} \\
                \text{$\mathcal{Q}_{\mathsf{Issue}} \gets \emptyset, \mathcal{Q}_{\mathsf{Present}} \gets \emptyset, \mathcal{Q}_{\mathsf{Reveal}} \gets \emptyset,\mathcal{Q}_{\mathsf{Revoke}} \gets \emptyset$}\\
                \text{$\mathbb{U} \gets \Adv^{\mathcal{O}_{\mathsf{Add}}}(\MyPP,\mathbb{U})$}\\
                \text{for each $i\in [n_\Issuer]$}:\\
                \quad \text{$(\ISK_i,\IPK_i,\vec{\mathbb{A}}_i,\Auxiliary_i)\gets \Omega.\IssuerSetup(1^\lambda,\mathbb{U})$ }\\
                \text{$(\PT^*,\phi_{\vec{\mathbb{A}}_k}^*,\Ctx^*,\IS^*)\gets \Adv^{\mathcal{O}_{\mathsf{Issue}},\mathcal{O}_{\mathsf{Present}},\mathcal{O}_{\mathsf{Reveal}},\mathcal{O}_{\mathsf{Revoke}}}(\MyPP,$}\\
                \text{$\mathbb{U},\{\IPK_i,\vec{\mathbb{A}}_i,\Auxiliary_i\}_{i\in [n_\Issuer]})$}\\
                \text{return $1$ if:}\\
                \quad \text{$\Omega.\MyVerify(\PT^*, \phi_{\vec{\mathbb{A}}_k}^*,\Ctx^*,\{\IPK_p,\vec{\mathbb{A}}_p,\Auxiliary_p\}_{p\in \IS^*})=1$}\\
                \quad \text{$\IS^* \subseteq [n_\Issuer]$}\\
                \quad \text{$(\phi_{\vec{\mathbb{A}}_k}^*,\Ctx^*,\IS^*)\notin \mathcal{Q}_{\mathsf{Present}}$}\\
                \quad \text{$\nexists (i,j)\in \mathcal{Q}_{\mathsf{Reveal}}$ such that:}\\
                \quad \quad \text{$\phi_{\vec{\mathbb{A}}_k}^*(\vec{a}_{i,j})=1$ and $\Cred_{i,j} \notin \mathcal{Q}_{\mathsf{Revoke}}$}\\
                \text{return $0$}
            }
            \procedure[width=0.4\columnwidth]{$\mathsf{Exp}^{\mathsf{Unlinkability}}_{\Adv}(\lambda,n_a,n_r,n_\Issuer)$:}{
                \pcln \text{$b \sample \{0,1\}$} \\ 
                \pcln \text{$(\MyPP,\mathbb{U}) \gets \Omega.\MySetup(1^\lambda,n_a,n_r,n_\Issuer)$}\\
                \pcln \text{$(\{\Cred_l,i_l\}_{l\in [2]},\{\IPK_p,\Auxiliary_p,\vec{\mathbb{A}}_p\}_{p\in [n_\Issuer]},\phi_{\vec{\mathbb{A}}_k},\Ctx,\IS) \gets $}\\ 
                \pcln \text{$\Adv^{\mathcal{O}_{\mathsf{Revoke}}}(\MyPP,\mathbb{U})$}\\
                \pcln \text{$\PT \gets \Omega.\PresentCred(\Cred_b, \phi_{\vec{\mathbb{A}}_k},\Ctx,\{\IPK_p,\vec{\mathbb{A}}_p,\Auxiliary_p\}_{p \in \IS})$}\\
                \pcln \text{$b^* \gets \Adv(\PT)$}\\
                \pcln \text{return $1$ if:}\\
                \pcln \quad \text{$b=b^*$}\\
                \pcln \quad \text{for each $l\in [2]$ that:}\\
                \pcln \quad \quad \text{$i_l \in [n_\Issuer]$}\\
                \pcln \quad \quad \text{$\Omega.\VerifyCred(\Cred_l, \IPK_{i_l}, \vec{\mathbb{A}}_{i_l})=1$}\\
                \pcln \quad \quad \text{$\phi_{\vec{\mathbb{A}}_k}(\vec{a}_l)=1$}\\
            	\pcln \quad \quad \text{$\Cred_l \notin \mathcal{Q}_{\mathsf{Revoke}}$}\\
	            \pcln \text{return $b^\prime \sample \{0,1\}$}
            }    
        \end{pcvstack}
        \caption{The games that capture the security properties of \Sysname.}
        \label{fig:games}
    \end{figure}

  \emph{Unlinkability}: This property ensures that even a malicious verifier (i.e., adversary \Adv) cannot link two presentations generated from the same holder, even if it has access to the underlying credentials. Unlinkability also implies selective disclosure, as any leakage of information beyond the predicate $\phi_{\vec{\mathbb{A}}_k}$ would enable the adversary to distinguish between presentations derived from different credentials.

\begin{definition}[Unforgeability]
\Sysname satisfies unforgeability if, for any PPT adversary \Adv and any numbers of issuers $l_\Issuer$ and verifiers $l_\Verifier$, there exists a negligible function $\mathsf{negl}$ such that:
$$\Pr\left [ \mathsf{Exp}^{\mathsf{Unforgeability}}_{\Adv}(\lambda,n_a,n_r,n_\Issuer) = 1 \right ]\leq \mathsf{negl}(\lambda),$$
where the experiment is defined in Fig.~\ref{fig:games}.
\end{definition}

\begin{definition}[Unlinkability]
\Sysname satisfies unlinkability if, for any PPT adversary \Adv, there exists a negligible function $\mathsf{negl}$ such that:
$$\left |\Pr\left [ \mathsf{Exp}^{\mathsf{Unlinkability}}_{\Adv}(\lambda,n_a,n_r,n_\Issuer)=1 \right ]-1/2 \right |\leq \mathsf{negl}(\lambda),$$
where the experiment is defined in Fig.~\ref{fig:games}.
\end{definition}

\section{The Protocol}
\label{sec:protocol}
This section details the proposed \Sysname protocol. We begin by presenting a generic construction in Section~\ref{subsec:construction}, followed by practical instantiations of the underlying building blocks in Section~\ref{subsec:instantiation}. Finally, we provide a formal security analysis in Section~\ref{subse:security-analysis}.

\subsection{Protocol Construction}
\label{subsec:construction}

Let $\Sigma_\Issuer=(\mathsf{Setup},\mathsf{KeyGen},\mathsf{Sign},\mathsf{Verify})$ denote a digital signature scheme. Let $\Theta_a$, $\Theta_r$, and $\Theta_\Issuer$ be vector commitment schemes comprising the tuple of algorithms $(\mathsf{Setup},\mathsf{Commit},\mathsf{Open},\mathsf{Verify})$. Furthermore, let $\Pi=(\mathsf{Setup},\mathsf{KeyGen},\mathsf{Prove},\mathsf{Verify})$ represent a general-purpose zk-SNARK scheme.

We define the NP relation $\mathcal{R}_{\phi_{\vec{\mathbb{A}}_k}}$ utilized within the zk-SNARK to generate the credential presentation as follows:
\begin{equation}
\nonumber
	\begin{aligned}
		\mathcal{R}_{\phi_{\vec{\mathbb{A}}_k}}=&\{((\phi_{\vec{\mathbb{A}}_k},c);((c_a^{i,j},\sigma_{i,j}),(i_i,\IPK_i,c_r^i,\pi_\Issuer^i),\\
		&\{i_p,\Idx_p,v_p,\pi_a^p\}_{\mathbb{A}_p \in \vec{\mathbb{A}}_k},(i_l,h_l,h_r,\pi_r^l,\pi_r^r))):\\
		& \wedge_{\mathbb{A}_p\in \vec{\mathbb{A}}_k} \Theta_a.\mathsf{Verify}(c_a^{i,j},i_p,(\Idx_p,v_p),\pi_a^p)=1\\
		&\wedge \phi_{\vec{\mathbb{A}}_k}(\{(\Idx_p,v_p)\}_{\mathbb{A}_p \in \vec{\mathbb{A}}_k})=1\\
		&\wedge \Sigma_\Issuer.\Verify(\IPK_i,c_a^{i,j},\sigma_{i,j})=1\\
		&\wedge \Theta_\Issuer.\mathsf{Verify}(c,i_i,(\IPK_i,c_r^i),\pi_\Issuer^i)=1\\
        &\wedge \Theta_r.\mathsf{Verify}(c_r^i,i_l,h_l,\pi_r^l)=1\\
        &\wedge \Theta_r.\mathsf{Verify}(c_r^i,i_l+1,h_r,\pi_r^r)=1\\
        &\wedge h_l <  H(c_a^{i,j}||\sigma_{i,j}) < h_r\}.
	\end{aligned}
\end{equation}

Building upon these cryptographic primitives and the defined NP relation, the generic construction of \Sysname is illustrated in Fig.~\ref{fig:constructions}.

\begin{figure*}[htbp]
        \centering
        \begin{pchstack}[boxed, center, space=1em]
             
             \begin{pcvstack}
                \procedure[linenumbering, width=0.24\columnwidth]{$\MySetup(1^\lambda,n_a,n_r,n_\Issuer)\rightarrow (\MyPP,\mathbb{U})$:}{
                    \text{Run $\MyPP_a \gets \Theta_a.\MySetup(1^\lambda,n_a)$;} \\
                    \text{Run $\MyPP_r \gets \Theta_r.\MySetup(1^\lambda,n_r)$;} \\
                    \text{Run $\MyPP_\Issuer \gets \Theta_\Issuer.\MySetup(1^\lambda,n_\Issuer);$} \\
                    \text{Run $\MyPP_z \gets \Pi.\MySetup(1^\lambda)$;} \\
                    \text{Run $\MyPP_s \gets \Sigma_\Issuer.\MySetup(1^\lambda)$;} \\
                    \text{Initialize $\mathbb{U}:=\emptyset$ and choose a hash function $H$;} \\
                    \text{Set $\MyPP:=(\MyPP_a,\MyPP_r,\MyPP_\Issuer,\MyPP_z,\MyPP_s,H)$;}\\
                    \text{Return $(\MyPP,\mathbb{U}).$}
                }
                \vspace{10pt}
                 \procedure[linenumbering, width=0.24\columnwidth]{$\AddAttributes(\mathbb{U},\vec{\mathbb{A}})\rightarrow \mathbb{U}^\prime$:}{
                    \text{Append all $\mathbb{A}\in \vec{\mathbb{A}}$ to $\mathbb{U}$;} \\
                    \text{Return the updated $\mathbb{U}$ as $\mathbb{U}^\prime$.}
                }
                \vspace{10pt}
                \procedure[linenumbering, width=0.24\columnwidth]{$\IssuerSetup(1^\lambda,\mathbb{U})\rightarrow (\ISK_i,\IPK_i,\vec{\mathbb{A}}_i,\Auxiliary_i)$:}{
                    \text{Run $(\ISK_i,\IPK_i) \gets \Sigma_\Issuer.\mathsf{KeyGen}(1^\lambda)$;} \\
                    \text{Select $\vec{\mathbb{A}}_i \subseteq \mathbb{U}$ and initialize $\Auxiliary_i \gets \emptyset$;} \\
                    \text{Return $(\ISK_i,\IPK_i,\vec{\mathbb{A}}_i,\Auxiliary_i)$.}
                }
                \vspace{10pt}
                \procedure[linenumbering, width=0.24\columnwidth]{$\IssueCred(\ISK_i,\vec{a}_{i,j},\vec{\mathbb{A}}_i)\rightarrow \Cred_{i,j}$:}{
                \text{Assert that $\vec{a}_{i,j}\in \vec{\mathbb{A}}_i$;} \\
                \text{Convert $\vec{a}_{i,j}$ into $\vec{a}^\prime_{i,j}$ with $\vec{a}^\prime_{i,j}[p] = (\Idx_{\vec{\mathbb{A}}_i[p]},\vec{a}_{i,j}[p])$;} \\
                \text{Pad $\vec{a}^\prime_{i,j}$ into $\vec{a}^{\prime\prime}_{i,j}$ with $(0,0)$ to a fixed length $n_a$;} \\
                \text{Run $c_a^{i,j}\gets \Theta_a.\mathsf{Commit}(\vec{a}^{\prime\prime}_{i,j})$;} \\
                \text{Run $\sigma_{i,j}\gets \Sigma_\Issuer.\mathsf{Sign}(\ISK_i,c_a^{i,j})$;} \\
                \text{Set $\Cred_{i,j}:=(\vec{a}_{i,j},\sigma_{i,j})$ and return it.}
                }
                \vspace{10pt}
                \procedure[linenumbering, width=0.24\columnwidth]{$\VerifyCred(\Cred_{i,j},\IPK_i,\vec{\mathbb{A}}_i)\rightarrow b$:}{
                    \text{Parse $\Cred_{i,j}$ as $(\vec{a}_{i,j},\sigma_{i,j})$;} \\
                    \text{Assert that $\vec{a}_{i,j}\in \vec{\mathbb{A}}_i$;} \\
                    \text{Reconstruct $c_a^{i,j}$ from $\vec{a}_{i,j}$;} \\
                    \text{Return $b\gets \Sigma_\Issuer.\mathsf{Verify}(\IPK_i,c_a^{i,j},\sigma_{i,j})$.}
                }

             \end{pcvstack}

              \begin{pcvstack}
               \procedure[linenumbering, width=0.24\columnwidth]{$\Revoke(\Auxiliary_i,\Cred_{i,j})\rightarrow \Auxiliary_i^\prime$:}{
                    \text{Calculte $h_{i,j}:=H(\Cred_{i,j})$}\\
                    \text{Parse $\Auxiliary_i$ into $(\RL_i,c_r^i)$}\\
                    \text{Add $h_{i,j}$ to $\RL_i$ and sort $\RL_i$ into $\RL_i^\prime$;}\\
                    \text{Pad $\RL_i^\prime$ with $0$ to a fixed length $n_r$;}\\
                    \text{Convert $\RL_i^\prime$ into a vector $\vec{L}_i$;}\\
                    \text{Obtain new $c_r^i\gets \Theta_r.\mathsf{Commit}(\vec{L}_i)$;}\\
                    \text{Return $\Auxiliary_i^\prime=(\RL_i^\prime,c_r^i)$.}
                }
                \vspace{10pt}
                 \procedure[width=0.24\columnwidth]{$\PresentCred(\Cred_{i,j},\phi_{\vec{\mathbb{A}}_k},\Ctx,\{\IPK_p,\vec{\mathbb{A}}_p,\Auxiliary_p\}_{p \in \IS})\rightarrow \PT$:}{
                    \pcln \text{Parse $\Cred_{i,j}$ as $(\vec{a}_{i,j},\sigma_{i,j})$;} \\
                    \pcln \text{Assert that $i\in \IS$ and $\vec{\mathbb{A}}_{k}\subseteq \vec{\mathbb{A}}_p$ for each $p\in \IS$;} \\
                    \pcln \text{Reconstruct $c_a^{i,j}$ from $\vec{a}_{i,j}$;} \\
                    \pcln \text{Run $\pi_a^p \gets \Theta_a.\mathsf{Open}(c_a^{i,j},i_p,(\Idx_p,v_p))$ for each $\mathbb{A}_p\in \vec{\mathbb{A}}_k$;}\\
                    \pcln \text{Obtain $c_r^p$ from $\Auxiliary_p$ for each $p\in \IS$;}\\
                    % \pcln \text{Convert $\RL^\prime_i$ into $\vec{L}_i$ for each $i \in \IS$;}\\
                    % \pcln \text{Obtain $c_r^p\gets \Theta_r.\mathsf{Commit}(\vec{L}_p)$ for each $p \in \IS$};\\
                    \pcln \text{Organize $\{(\IPK_p,c_r^p)\}_{p\in \IS}$ into a vector $\vec{C}$;}\\
                    \pcln \text{Run $c \gets \Theta_\Issuer.\mathsf{Commit}(\vec{C})$;}\\
                    \pcln \text{Run $\pi_\Issuer^i \gets \Theta_\Issuer.\mathsf{Open}(c,i_i,(\IPK_i,c_r^i))$ for $\Issuer_i$ that issues $\Cred_{i,j}$};\\
                    \pcln \text{Find $h_l$ and $h_r$ in $\RL_i^\prime$ such that $h_l < H(\Cred_{i,j}) < h_r$};\\
                    \pcln \text{Run $\pi_r^l \gets \Theta_r.\mathsf{Open}(c_r^i,i_l,h_l)$};\\
                    \pcln \text{Run $\pi_r^r \gets \Theta_r.\mathsf{Open}(c_r^i,i_r,h_r)$};\\
                    \pcln \text{Run $(\ZPK,\ZVK)\gets \Pi.\mathsf{KeyGen}(\MyPP_z,\mathcal{R}_{\phi_{\vec{\mathbb{A}}_k}})$;}\\
                    \pcln \text{Run $\pi \gets \Pi.\mathsf{Prove}(\ZPK,(\phi_{\vec{\mathbb{A}}_k},c),((c_a^{i,j},\sigma_{i,j}),(i_i,\IPK_i,c_r^i,\pi_\Issuer^i),$}\\
                    \text{ \quad \quad $\{i_p,\Idx_p,v_p,\pi_a^p\}_{\mathbb{A}_p \in \vec{\mathbb{A}}_k},(i_l,h_l,h_r,\pi_r^l,\pi_r^r))$ with \Ctx;}\\
                    \pcln \text{Return $\PT:=\pi$.}
                }    
                \vspace{10pt}
                 \procedure[width=0.24\columnwidth]{$\Verify(\PT,\phi_{\vec{\mathbb{A}}_k},\Ctx,\{\IPK_p,\vec{\mathbb{A}}_p,\Auxiliary_p\}_{p \in \IS})\rightarrow b$:}{
                    \pcln \text{Assert $\vec{\mathbb{A}}_{k}\subseteq \vec{\mathbb{A}}_p$ for each $p\in \IS$; }\\
                    \pcln \text{Reconstruct $c$ according to $\{\IPK_p,\Auxiliary_p\}_{p \in \IS};$} \\
                    \pcln \text{Run $(\ZPK,\ZVK)\gets \Pi.\mathsf{KeyGen}(\MyPP_z,\mathcal{R}_{\phi_{\vec{\mathbb{A}}_k}})$;}\\
                    \pcln \text{Return $b\gets \Pi.\mathsf{Verify}(\ZPK,(\phi_{\vec{\mathbb{A}}_k},c),\PT)$ with \Ctx.} 
                } 
             \end{pcvstack}

    \end{pchstack}
        \caption{The generic construction of \Sysname.}
        \label{fig:constructions}
    \end{figure*}

\subsection{Instantiation}
\label{subsec:instantiation}
This section discusses the practical instantiation of the building blocks within \Sysname. Given that \Sysname relies on zk-SNARKs for credential presentation, selecting zk-friendly primitives is crucial for optimizing system efficiency.

Specifically, the NP relation $\mathcal{R}_{\phi_{\vec{\mathbb{A}}_k}}$ primarily involves the following operations: (1) vector commitment verification; (2) digital signature verification; (3) hash computation; and (4) predicate evaluation. To minimize proving time, we select efficient constructions for these operations.

For vector commitments, we employ the Merkle tree-based scheme proposed in~\cite{cryptoeprint:2025/667}, which offers efficient commitment and opening algorithms. In conjunction with zk-friendly hash functions such as Poseidon~\cite{poseidon}, we achieve efficient in-circuit hash computations.

For digital signatures, we adopt standard schemes such as ECDSA, Schnorr, and EdDSA. However, the group point multiplications required for verification are computationally expensive within a zk-SNARK circuit. To address this, we utilize the ``cycle of curves'' technique~\cite{cryptoeprint:2023/1192}, which transforms group point multiplications into field multiplications over the SNARK proving curve's scalar field, requiring only approximately 2.5k constraints to verify an ECDSA signature.

Finally, for predicate evaluation, we employ arithmetic circuits to implement constraints such as range proofs, set membership, and linear inequalities, provided the attribute values are encoded within the scalar field of the zk-SNARK's proving curve.

With these optimizations, we achieve an efficient instantiation of \Sysname, ensuring practical performance as demonstrated in Section~\ref{sec:evaluation}.

\subsection{Security Analysis}
\label{subse:security-analysis}

In this section, we provide the security proof for \Sysname.

\begin{theorem}[Unforgeability]
\label{theorem:unforgeability}
If the signature scheme $\Sigma_\Issuer$ is EUF-CMA secure, the vector commitment schemes satisfy the binding property, and the \zkSNARK scheme $\Pi$ satisfies zero-knowledge and knowledge soundness, then \Sysname satisfies unforgeability.
\end{theorem}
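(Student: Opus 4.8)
The plan is to proceed by a short sequence of game hops followed by a witness‑extraction‑and‑case‑analysis argument. Let \emph{Game $G_0$} be the real experiment $\mathsf{Exp}^{\mathsf{Unforgeability}}_{\Adv}$, and let \emph{Game $G_1$} replace every proof returned by $\mathcal{O}_{\mathsf{Present}}$ with one produced by the zk‑SNARK simulator $\mathcal{S}$ on the public statement $(\phi_{\vec{\mathbb{A}}_k}, c)$, so that $\mathcal{O}_{\mathsf{Present}}$ no longer reads any secret key or credential; zero‑knowledge of $\Pi$ makes $|\Pr[G_1 = 1] - \Pr[G_0 = 1]|$ negligible. After this hop the issuer signing keys are touched only inside $\mathcal{O}_{\mathsf{Issue}}$, and the released credentials only inside $\mathcal{O}_{\mathsf{Reveal}}$ and $\mathcal{O}_{\mathsf{Revoke}}$.

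Suppose $\Adv$ wins $G_1$ with output $(\PT^*, \phi^*, \Ctx^*, \IS^*)$. Then $\MyVerify$ recomputes $c^*$ from the honest tuples $\{(\IPK_p, \Auxiliary_p)\}_{p \in \IS^*}$ and accepts the SNARK proof on statement $(\phi^*, c^*)$ bound to $\Ctx^*$; since $(\phi^*, \Ctx^*, \IS^*) \notin \mathcal{Q}_{\mathsf{Present}}$, this proof was not output by $\mathcal{S}$, so knowledge soundness gives an extractor $\mathcal{E}$ producing, except with negligible probability, a witness $w^* = ((c_a^{i,j}, \sigma_{i,j}), (i_i, \IPK_i, c_r^i, \pi_\Issuer^i), \{i_p, \Idx_p, v_p, \pi_a^p\}_{\mathbb{A}_p \in \vec{\mathbb{A}}_k}, (i_l, h_l, h_r, \pi_r^l, \pi_r^r))$ with $((\phi^*, c^*), w^*) \in \mathcal{R}_{\phi^*}$. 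The rest of the argument shows that $w^*$ forces the existence of a credential the winning condition rules out.

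I would do this in three ``pinning'' steps. \emph{Issuer:} the relation provides a valid $\Theta_\Issuer$‑opening of $c^*$ at position $i_i$ to $(\IPK_i, c_r^i)$; as $c^*$ was built honestly from $\{(\IPK_p, c_r^p)\}_{p \in \IS^*}$, binding of $\Theta_\Issuer$ forces $i_i$ to index some honest $q \in \IS^*$ with $(\IPK_i, c_r^i) = (\IPK_q, c_r^q)$. \emph{Signature:} the relation provides a valid $\Sigma_\Issuer$ signature $\sigma_{i,j}$ on $c_a^{i,j}$ under $\IPK_q$; a reduction $\mathcal{B}$ to EUF‑CMA guesses $q \in [n_\Issuer]$, plants the challenge key as $\IPK_q$, and simulates $G_1$ while calling its signing oracle on a commitment $c_a^{q,\cdot}$ only lazily, i.e. when that signature is actually needed to answer an $\mathcal{O}_{\mathsf{Reveal}}$ or $\mathcal{O}_{\mathsf{Revoke}}$ query, so that if $c_a^{i,j}$ was never queried, $\mathcal{B}$ outputs $(c_a^{i,j}, \sigma_{i,j})$ and breaks EUF‑CMA; hence, except with negligible probability, $c_a^{i,j} = c_a^{q,j'}$ for some $(q, j') \in \mathcal{Q}_{\mathsf{Issue}}$ whose credential $\Cred_{q,j'}$ was actually released to $\Adv$. \emph{Attributes and non‑revocation:} binding of $\Theta_a$ on $c_a^{i,j} = c_a^{q,j'} = \Theta_a.\mathsf{Commit}(\vec{a}^{\prime\prime}_{q,j'})$ forces each opened pair $(\Idx_p, v_p)$ to be the genuine committed entry, so $\phi^*(\{(\Idx_p, v_p)\}) = 1$ yields $\phi^*(\vec{a}_{q,j'}) = 1$; and binding of $\Theta_r$ on $c_r^q = \Theta_r.\mathsf{Commit}(\vec{L}_q)$ forces $h_l = \vec{L}_q[i_l]$ and $h_r = \vec{L}_q[i_l+1]$, so with $\vec{L}_q$ sorted (its zero‑padding at the minimal end) and $h_l < H(\Cred_{q,j'}) < h_r$ the hash $H(\Cred_{q,j'})$ does not occur in $\vec{L}_q$, i.e. $\Cred_{q,j'}$ was never passed to $\mathcal{O}_{\mathsf{Revoke}}$. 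This exhibits a revealed, unrevoked $\Cred_{q,j'}$ with $\phi^*(\vec{a}_{q,j'}) = 1$, contradicting the last clause of the winning condition; hence $\Pr[G_0 = 1]$ is negligible.

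I expect the main obstacle to be the signature/non‑revocation chain rather than any individual reduction. On one side, one must argue via EUF‑CMA that the extracted $(c_a^{i,j}, \sigma_{i,j})$ corresponds to an \emph{actually released} credential and not merely an internally issued one, which is precisely why the EUF‑CMA reduction must sign lazily. On the other side, the gap proof certifies non‑revocation only if a revoked credential cannot be ``refreshed'': if $\Sigma_\Issuer$ admitted a second valid signature on a committed message, $\Adv$ could assemble a credential whose hash lies outside $\vec{L}_q$ even when $(q, j')$ was revoked, so closing this cleanly requires strong unforgeability or uniqueness of $\Sigma_\Issuer$ (which holds for the EdDSA instantiation) or an extra appeal to collision resistance of $H$. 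That step, together with the $n_\Issuer$‑factor key guess and the care needed to invoke the extractor on a statement the simulator never produced, is where the subtlety lies; the remaining steps are routine applications of the stated binding, zero‑knowledge, and knowledge‑soundness properties.
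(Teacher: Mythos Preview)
Your proof follows essentially the same approach as the paper: simulate $\mathcal{O}_{\mathsf{Present}}$ via zero-knowledge, extract the witness from $\PT^*$ via knowledge soundness, then use binding of $\Theta_\Issuer$, $\Theta_a$, $\Theta_r$ together with EUF-CMA of $\Sigma_\Issuer$ to argue the extracted witness must correspond to a revealed, unrevoked credential satisfying $\phi^*$. Your treatment is in fact more careful than the paper's own proof---the explicit key-guessing and lazy-signing EUF-CMA reduction, and your observation that the revocation step tacitly needs strong unforgeability or signature uniqueness, are points the paper glosses over.
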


Broadly speaking, an adversary can break unforgeability in two ways: (1) by forging a $\Sigma_\Issuer$ signature (implying a compromise of an honest issuer's secret key), or (2) by forging vector commitment proofs (breaking the binding property of $\Theta_a$, $\Theta_r$, or $\Theta_\Issuer$).

\begin{proof}
Let \Adv be a PPT adversary. We first modify the unforgeability game by simulating all presentation tokens \PT output by $\mathcal{O}_{\mathsf{Present}}$.
Since the tokens are generated by a \zkSNARK that satisfies zero-knowledge, the adversary's advantage in the modified game increases by at most a negligible amount.

In this modified game, if \Adv wins, we can apply the \zkSNARK extractor to recover the witness $w = (((c_a^{i,j},\sigma_{i,j}),\allowbreak (i_i,\IPK_i,c_r^i,\pi_\Issuer^i),\{i_p,\Idx_p,v_p,\pi_a^p\}_{\mathbb{A}_p \in \vec{\mathbb{A}}_k},(i_l,h_l,h_r,\pi_r^l,\pi_r^r)))$.
We define two events: $E_e^f$ denotes the event where \Adv wins but the extractor fails to output a valid witness; $E_e^s$ denotes the event where \Adv wins, extraction succeeds, and the witness is valid.
It holds that $\Pr[\Adv\,\,\mathrm{wins}]=\Pr[E_e^f]+\Pr[E_e^s]$. Because $\Pi$ satisfies knowledge soundness, $\Pr[E_e^f]$ is negligible. We now demonstrate that $\Pr[E_e^s]$ is negligible, which concludes the proof.

The event $E_e^s$ implies that the extracted witness is valid and satisfies the constraints defined in relation $\mathcal{R}_{\phi_{\vec{\mathbb{A}}_k}}$.
We categorize the possible scenarios for $E_e^s$ into three cases and show that each occurs with negligible probability:
\begin{itemize}
    \item {Signature forgery or binding breach ($\Sigma_\Issuer, \Theta_\Issuer$):} The witness contains a valid signature $(\IPK_i,c_a^{i,j},\sigma_{i,j})$. If $\IPK_i$ corresponds to an honest issuer's public key committed in $c$, then \Adv has successfully forged a signature under $\Sigma_\Issuer$, contradicting its EUF-CMA security. Conversely, if $\IPK_i$ does not belong to any honest issuer but is accepted via the proof $\pi_\Issuer^i$ against the honest commitment $c$, then \Adv has forged a vector commitment opening proof under $\Theta_\Issuer$, contradicting the binding property of $\Theta_\Issuer$.
    \item {Attribute binding breach ($\Theta_a$):} The condition $\phi_{\vec{\mathbb{A}}_k}(\{\Idx_p,v_p\}_{\mathbb{A}_p\in \mathbb{A}_k})=1$ holds. This implies that the adversary has constructed a valid presentation satisfying the predicate without possessing a valid credential with those specific attributes. To achieve this, \Adv must forge the necessary attributes and associated vector commitment proofs with respect to the committed $c_a^{i,j}$, which contradicts the binding property of $\Theta_a$.
    \item {Revocation binding breach ($\Theta_r$):} The condition $h_l <  H(\Cred_{i,j}) < h_r$ holds. This implies that the adversary has successfully constructed a valid presentation for a revoked credential (where the hash should ideally be present in the list, not between elements). To achieve this, \Adv must forge two vector commitment proofs under $\Theta_r$ for $h_l$ and $h_r$ relative to $c_r^i$, which contradicts the binding property of $\Theta_r$.
\end{itemize}

Based on the above analysis, we conclude that $\Pr[E_e^s]$ is negligible. Thus, the overall advantage of \Adv in breaking unforgeability is negligible.
\end{proof}

\begin{theorem}[Unlinkability]
	If $\Pi$ satisfies the zero-knowledge property, then \Sysname satisfies unlinkability.
\end{theorem}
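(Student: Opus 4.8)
The plan is to establish unlinkability through a single game hop that replaces the honestly generated presentation token by a \zkSNARK-simulated one; once the token is simulated it is manifestly independent of the challenge bit, so no adversary can do better than guessing. Write \emph{Game $0$} for $\mathsf{Exp}^{\mathsf{Unlinkability}}_{\Adv}$ of Fig.~\ref{fig:games}, and note that it suffices to bound the success probability conditioned on \Adv returning a \emph{valid} challenge --- one for which $\PresentCred(\Cred_b,\phi_{\vec{\mathbb{A}}_k},\Ctx,\cdot)$ does not abort for either $b$ (so $i_1,i_2\in\IS$ and $\vec{\mathbb{A}}_k\subseteq\vec{\mathbb{A}}_p$ for all $p\in\IS$) and for which the validity conditions of Fig.~\ref{fig:games} hold --- since on any other challenge the experiment returns a fresh coin and contributes exactly $1/2$.

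First I would check that, under this conditioning, the tuple assembled inside $\PresentCred$ is a genuine witness for $\bigl((\phi_{\vec{\mathbb{A}}_k},c);\mathsf{w}_b\bigr)\in\mathcal{R}_{\phi_{\vec{\mathbb{A}}_k}}$: $\VerifyCred(\Cred_b,\IPK_{i_b},\vec{\mathbb{A}}_{i_b})=1$ gives a verifying issuer signature on $c_a^{i_b,j}$ together with the attribute openings $\pi_a^p$; $\phi_{\vec{\mathbb{A}}_k}(\vec{a}_b)=1$ gives predicate satisfaction; $i_b\in\IS$ gives the opening $\pi_{\Issuer}^{i_b}$ of $(\IPK_{i_b},c_r^{i_b})$ in the aggregated commitment $c$; and $\Cred_b\notin\mathcal{Q}_{\mathsf{Revoke}}$ guarantees a gap $h_l<H(\Cred_b)<h_r$ between two adjacent entries of the sorted, padded list committed in $c_r^{i_b}$, which yields $\pi_r^l,\pi_r^r$. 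Hence the \zkSNARK simulator $\mathcal{S}$ may legitimately be invoked on this statement.

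Next, let \emph{Game $1$} be Game $0$ with the prover call $\pi\gets\Pi.\mathsf{Prove}(\ZPK,(\phi_{\vec{\mathbb{A}}_k},c),\mathsf{w}_b)$ inside $\PresentCred$ replaced by $\pi\gets\mathcal{S}(\MyPP_z,\mathcal{R}_{\phi_{\vec{\mathbb{A}}_k}},(\phi_{\vec{\mathbb{A}}_k},c))$, with $\Ctx$ threaded through exactly as in the honest proof. A reduction to the zero-knowledge property of $\Pi$ --- in its adaptive form, where the reduction sees $\MyPP_z,\ZPK,\ZVK$ first and only then supplies $((\phi_{\vec{\mathbb{A}}_k},c),\mathsf{w}_b)$, plants the returned proof as $\PT$, and outputs whatever $\mathsf{Exp}^{\mathsf{Unlinkability}}_{\Adv}$ would output --- shows that the winning probabilities in Game $0$ and Game $1$ differ by at most $\mathsf{negl}(\lambda)$. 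Finally, in Game $1$ the simulator is fed only $\MyPP_z$, the relation $\mathcal{R}_{\phi_{\vec{\mathbb{A}}_k}}$, the statement $(\phi_{\vec{\mathbb{A}}_k},c)$, and the context $\Ctx$, none of which depends on $b$: the predicate, $\Ctx$ and $\IS$ are chosen by \Adv before $b$ is sampled, and $c=\Theta_{\Issuer}.\mathsf{Commit}(\vec{C})$ is reconstructed deterministically from the adversary-supplied $\{\IPK_p,\Auxiliary_p\}_{p\in\IS}$, independently of whether the challenger holds $\Cred_0$ or $\Cred_1$. Therefore $\PT$, and hence $b^*$, is distributed identically for $b=0$ and $b=1$, so \Adv wins Game $1$ with probability exactly $1/2$; combining the two steps gives $\bigl|\Pr[\mathsf{Exp}^{\mathsf{Unlinkability}}_{\Adv}=1]-1/2\bigr|\le\mathsf{negl}(\lambda)$.

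I expect the main obstacle to be bookkeeping rather than a deep idea: one must verify that \emph{every} $b$-dependent quantity has been scrubbed from the adversary's view, in particular that the public statement of the \zkSNARK and the Fiat--Shamir transcript contain nothing beyond $(\phi_{\vec{\mathbb{A}}_k},c,\Ctx)$ and that $c$ as reconstructed inside $\MyVerify$ coincides with the prover's $c$ and depends only on adversary-supplied issuer data; and one must pin down the minor corner cases in which $\PresentCred$ would abort (e.g.\ $i_b\notin\IS$) and argue they are harmless. The only other point to be careful about is bridging the single-statement zero-knowledge definition stated earlier with the adaptive variant used in the reduction, which is a routine hybrid.
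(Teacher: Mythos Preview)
Your proposal is correct and follows essentially the same route as the paper's proof: replace the real \zkSNARK proof by a simulated one via zero-knowledge, then observe that the simulator's input $(\phi_{\vec{\mathbb{A}}_k},c,\Ctx)$ is determined entirely by adversary-supplied data and hence independent of $b$. The paper presents this as a two-line sketch, whereas you have unpacked it into an explicit game hop with the conditioning on valid challenges and the witness-validity check made precise; your final remark about reconciling the single-statement zero-knowledge definition with the adaptive reduction is a genuine technical wrinkle the paper glosses over.
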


\begin{proof}
We provide a proof sketch, as this property follows directly from the zero-knowledge property of the zk-SNARK. Specifically, the unlinkability experiment is designed such that the witnesses used for computing the challenge presentation $\PT^*$ are valid regardless of whether the bit $b = 0$ or $b = 1$. Consequently, the real experiment is computationally indistinguishable from a simulated experiment where the \zkSNARK simulator generates the proofs. In the simulated view, the presentation token is independent of the witness and thus independent of the bit $b$. Therefore, any adversary capable of distinguishing $b$ with non-negligible advantage would violate the zero-knowledge property of $\Pi$.
\end{proof}

\section{Evaluation}
\label{sec:evaluation}

In this section, we present the experimental evaluation of \Sysname, demonstrating its efficiency and practicality.

\subsection{Experimental Configuration}
\emph{Experimental Environment}: We conducted all experiments on a Mac mini workstation equipped with an Apple M4 chip and 16 GB of RAM.

\emph{Implementation Details}: We set the security parameter $\lambda=128$ and adopted Plonk~\cite{cryptoeprint:2019/953} as the underlying \zkSNARK scheme.
Our implementation utilizes the Halo2 library\footnote{https://github.com/privacy-ethereum/halo2}, employing the BN254 curve for the \zkSNARK.
For credential issuance, we use the ECDSA signature scheme constructed over the Grumpkin curve.
Additionally, we employ the ZK-friendly Poseidon hash function~\cite{grassi2021poseidon} for vector commitments and signatures to optimize performance.

\emph{Baseline}: While several anonymous credential schemes offer issuer hiding and high efficiency, they typically lack support for revocation, a critical feature that significantly impacts system performance.
To ensure a fair and comprehensive comparison, we evaluate \Sysname against the most relevant scheme proposed in CANS'21~\cite{bobolz2021issuer}, which supports both issuer hiding and revocation.

\subsection{Experimental Results}

\spar{Performance of main operations} We first evaluate the computational costs of the primary operations in \Sysname, including system setup, attribute addition, issuer setup, credential issuance/verification/revocation, and credential presentation generation/verification.

The system setup phase involves generating system parameters, a process dominated by the generation of public parameters for the \zkSNARK.
Fig.~\ref{fig:evaluation} (left) illustrates the time cost of system setup across different maximum circuit constraints.
As observed, the setup time scales with the number of constraints.
With a maximum circuit constraint of $2^{16}$, the setup time is approximately 3.5 seconds.

The attribute addition operation is a lightweight task that simply appends new attribute information to the attribute universe; its computational cost is negligible.
The issuer setup operation primarily involves key pair generation and revocation list initialization. Its cost is largely determined by public key generation, requiring approximately 1.7 ms.
During credential issuance, the issuer generates a vector commitment for the attribute values and signs this commitment.
Theoretically, the issuance cost scales linearly with the number of attributes in the credential.
However, due to the high efficiency of the underlying hash function, the cost is dominated by signature generation, taking about 1.8 ms.
Similarly, the verification cost is dominated by signature verification, taking approximately 3.8 ms.
For revocation, the issuer hashes the revoked credential and inserts the result into a sorted revocation list.
This operation is highly efficient, costing less than 8 seconds, even with a revocation list containing $2^{15}$ entries, which can be further optimized via parallel computation.
We summarize the performance of these operations in Table~\ref{table:main-operations}.

\begin{table}
    \centering
    \caption{Performance of Main Operations} 
    \label{table:main-operations}
    \begin{tabular}{cccc}
        \toprule     
        \IssuerSetup & \IssueCred  &  \VerifyCred & \Revoke \\
        \midrule   
        1.7 ms &1.8 ms &3.8 ms & 7.2 s \\
        \bottomrule %[2pt]     
    \end{tabular}
\end{table}

\begin{figure}[!t]
	\centering
    \begin{minipage}{0.45\columnwidth}
        \resizebox{\textwidth}{!}{\begin{tikzpicture}
\begin{loglogaxis}[%
    width = 4in,
    height = 3.6in,
    xmode = linear,
    ymode = linear,
    log ticks with fixed point,
    xmin = 12,
    xmax = 16,
    ymin = 0,
    ymax = 4.2,
    xmajorgrids,
    ymajorgrids,
    xlabel={\# constraints (in $\log_2$ scale)},
    ylabel={Time/s},
    legend style={%
        font = \large\itshape,
        legend cell align = left,
        align = left,
    },
    label style = {
        font = \large,
    },
    legend pos=north west,
    tick label style = {
        scale = 1.5
    },
    xtick={12,13,14,15,16},
    ytick={1,2,3,4}
]

\addplot[
    color=momo,
    line width=2pt,
    mark size=3pt,
    mark=*,
    mark options={
        solid,
        momo,
        fill=white
    },
    smooth
]
coordinates {
(12,0.25)
(13,0.496)
(14,0.925)
(15,1.8)
(16,3.5)
};
\addlegendentry{\MySetup}

\end{loglogaxis}
\end{tikzpicture}}
    \end{minipage}
    \begin{minipage}{0.45\columnwidth}
        \resizebox{\textwidth}{!}{\begin{tikzpicture}
\begin{loglogaxis}[%
    width = 4in,
    height = 3.6in,
    xmode = linear,
    ymode = linear,
    log ticks with fixed point,
    xmin = 12,
    xmax = 16,
    ymin = -0.1,
    ymax = 1.25,
    xmajorgrids,
    ymajorgrids,
    xlabel={\# constraints (in $\log_2$ scale)},
    ylabel={Time/s},
    legend style={%
        font = \large\itshape,
        legend cell align = left,
        align = left,
    },
    label style = {
        font = \large,
    },
    legend pos=north west,
    tick label style = {
        scale = 1.5
    },
    xtick={12,13,14,15,16},
    ytick={0,0.25,0.5,0.75,1.0}
]

\addplot[
    color=momo,
    line width=2pt,
    mark size=2.5pt,
    mark=*,
    mark options={
        solid,
        momo,
        fill=white,
    },
    smooth
]
coordinates {
  (12,0.141)
  (13,0.221)
  (14,0.375)
  (15,0.67)
  (16,1.18)
};
\addlegendentry{\PresentCred}

\addplot[
    color=ruri,
    line width=2pt,
    mark size=2.5pt,
    mark=*,
    mark options={
        solid,
        ruri,
        fill=white
    },
    smooth
]
coordinates {
  (12,0.002)
  (13,0.002)
  (14,0.002)
  (15,0.002)
  (16,0.002)
};
\addlegendentry{\Verify}

\end{loglogaxis}
\end{tikzpicture}}
    \end{minipage}
    \caption{Performance results for system setup (left) and credential presentation/verification (right).}
    \label{fig:evaluation}
\end{figure}
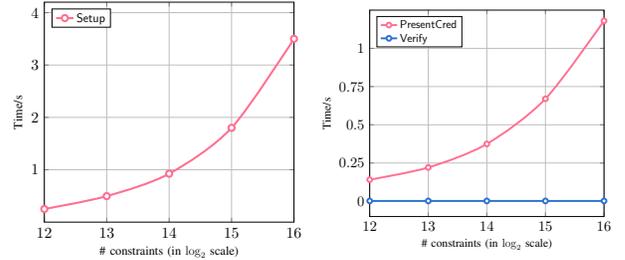

When generating a presentation, the holder collects the required information and generates the corresponding proof. This process is dominated by the \zkSNARK proving time.
The proving cost increases with the size of the issuer's revocation list, the issuer set size, and the complexity of the attribute predicates.
Leveraging vector commitments, the proving cost grows by a factor of $\log(n)$, where $n$ represents the size of the revocation list and the issuer set.
Given a revocation list with $2^{15}$ entries per issuer, an issuer set size of $2^{10}$, and a complex predicate within $2^{14}$ constraints, the total number of constraints for presenting a credential remains below $2^{15}$, which can be processed in 0.7 s.
The verification cost is dominated by the \zkSNARK verification process and remains constant at approximately 2 ms, regardless of circuit size.
Fig.~\ref{fig:evaluation} (right) presents the performance results for credential presentation and verification under different proving constraints.

\begin{center}
    \begin{threeparttable}
    \centering
    \caption{Comparison Results with the Baseline}
    \label{table:comparison}
	\setlength{\tabcolsep}{15pt}
    \begin{tabular}{cccc}
    \toprule     
        \makecell[c]{{Schemes}}&\makecell[c]{{Predicate \cirn{172}}} & \makecell[c]{{Predicate \cirn{173}}} \\ 
    \midrule 
        \makecell[c]{CANS'21} & 74 ms / 95 ms & 13 s / 127 ms \\
        % $\mathsf{IhMA_{AtoSa}}$ & 11/31ms & 11/31ms \\
        % \cite{sanders2023efficient} & 47/61  & 21/26 & 44/56\\
        Ours & 375 ms / 2 ms & 442 ms / 2 ms \\
    \bottomrule    
    \end{tabular}
    \begin{tablenotes}
    \scriptsize
        \item The values denote presentation generation time / verification time.
    \end{tablenotes}
\end{threeparttable}
\end{center}

\spar{Comparison with the baseline} We compare the presentation and verification costs of \Sysname with the CANS'21 baseline scheme across two different predicates.
We configure \Sysname with $2^7$ attributes per credential, $2^{15}$ revocation entries per issuer, and a hidden issuer set size of $2^{7}$.
We construct two distinct predicate instances for presentation and verification: \cirn{172} Age $> 18$, and \cirn{173} the holder is not present in a banlist of $2^{15}$ entries.
For the CANS'21 implementation, we employ the technique from~\cite{camenisch2008efficient} for the range proof in predicate \cirn{172} and the accumulator-based non-membership proof from~\cite{srinivasan2022batching} for predicate \cirn{173}.

Table~\ref{table:comparison} details the presentation and verification costs for both schemes.
As shown, \Sysname achieves performance comparable to CANS'21 for both predicates.
For predicate \cirn{172}, our scheme requires 375 ms for presentation generation and 2 ms for verification, whereas CANS'21 requires 74 ms and 95 ms, respectively.
For predicate \cirn{173}, our scheme takes 442 ms for presentation and 2 ms for verification, significantly outperforming CANS'21, which requires 13 seconds and 127 ms, respectively.
These results demonstrate that \Sysname achieves efficient credential presentation and verification while supporting secure revocation and flexible credential issuance.

\section{Related Work}
\label{sec:related-work}
\subsection{Anonymous Credentials}

Camenisch and Lysyanskaya~\cite{camenisch2001efficient} first proposed a fully anonymous credential scheme in 2001, which used the ZKP to show the owner has a pseudonym and a signature on it without revealing the real identity. Similar to Camenisch's work, Belenkiy~\cite{belenkiy2008p} proposed P-signatures based on a commitment and a signature scheme. It also used the ZKP to show knowledge of the signed value to achieve anonymity and unlinkability. Zhang et al.~\cite{zhang2012efficient} constructed anonymous credentials from attribute-based signatures and provided an optimized attribute proof. Beyond employing ZKP to achieve anonymity, some studies have proposed the construction and utilization of randomizable signatures. Verheul~\cite{verheul2001self} introduced an anonymous credential scheme that ensures unlinkability across sessions through credential randomization before presentation. Hanser and Slamanig~\cite{fuchsbauer2019structure} presented structure-preserving signatures on equivalence classes (SPS-EQ), which enable the joint randomization of messages and signatures without rendering them invalid. Combined with commitments, this approach forms the foundation for anonymous credential schemes.

\subsection{Issuer-Hiding Authentication} 

Recent research provided the issuer hiding property for stronger privacy protection for ACs. Schemes in~\cite{connolly2022improved,connolly2022protego,bosk2022hidden} achieved issuer hiding by raising the issuer's public key to some random power and proving the result is derived from a certain key in an accepted public key set. Bobolz et al.~\cite{bobolz2021issuer} used ZKP to prove that the holder owns a valid credential that can be verified by the public key signed by the verifier, which reduces the proving complexity to $O(1)$. Sanders et al. further optimized the proving and verification efficiency~\cite{sanders2023efficient} and designed a technique to verify the credential using the aggregated public key from approved issuers. Schemes in~\cite{mir2023aggregate,shi2023double} adopted the tag-based aggregatable mercurial signatures to achieve the aggregation of attributes from different issuers and issuer hiding. 

\subsection{Credential Revocation} 
At an early age, issuers maintain Certificate Revocation Lists (CRL), and the verifier uses the credential and Online Certificate Status Protocol (OCSP) to check the certificate status without considering identity privacy. Several subsequent research enhanced identity privacy by using domain-specific pseudonyms~\cite{kutylowski2011restricted}, trusted third parties~\cite{camenisch2003practical}, or Direct Anonymous Attestation (DAA)~\cite{brickell2004direct}, which only provide weak privacy protection or make strong security assumptions. For stronger privacy, some researchers require holders to provide non-revocation proofs in authentication based on whitelists~\cite{camenisch2002dynamic,fueyo2016efficiency,li2023attribute} or blacklists~\cite{nguyen2014u,jia2023generic,tsang2007blacklistable}. 

\section{Conclusion}
\label{sec: conclusion}

In this paper, we presented \Sysname, a novel issuer-hiding and revocable anonymous credential scheme for decentralized networks.
\Sysname introduces a flexible credential model that employs vector commitments with a padding strategy to unify credentials from heterogeneous issuers, enabling privacy-preserving authentication without enforcing a global static attribute set.
We also designed a decentralized revocation mechanism where holders prove non-revocation via sorted list gap proofs.
This approach effectively decouples revocation checks from verifier policies and prevents replay attacks while maintaining issuer anonymity.
Furthermore, \Sysname achieves strong attribute hiding through zk-SNARKs.
Our security analysis and performance evaluation confirm that \Sysname is a secure and efficient solution for privacy-preserving authentication in decentralized networks. 

% \clearpage
% \bibliographystyle{splncs04.bst}
\newpage
\bibliographystyle{IEEEtran}
\bibliography{ref}
% \input{sections/bios.tex}

%\newpage
% \appendices
% \input{sections/appendix/security-definitions.tex}
% \input{sections/appendix/conversion-split.tex}

\end{document}